\newcommand{\tabitem}{~\llap{\textbullet}~}
\newtheorem{problem}{Problem}[section]
\newtheorem{proposition}{Proposition}[section]
\def\BibTeX{{\rm B\kern-.05em{\sc i\kern-.025em b}\kern-.08em
    T\kern-.1667em\lower.7ex\hbox{E}\kern-.125emX}}
\begin{document}

\title{Creating Realistic Power Distribution Networks using Interdependent Road Infrastructure}

\author{\IEEEauthorblockN{Rounak Meyur, Madhav Marathe, Anil Vullikanti,\\Henning Mortveit, Samarth Swarup}
\IEEEauthorblockA{\textit{Biocomplexity Institute, University of Virginia} \\
Charlottesville, Virginia \\
\emph{\{rm5nz, marathe, vsakumar, henning.mortveit, swarup\}@virginia.edu}
}
\and
\IEEEauthorblockN{Virgilio Centeno, Arun Phadke}
\IEEEauthorblockA{\textit{ECE Department, Virginia Tech} \\
Blacksburg, Virginia \\
\emph{\{virgilio, aphadke\}@vt.edu}}
}

\maketitle
\thispagestyle{plain}
\pagestyle{plain}

\begin{abstract}
It is well known that physical interdependencies exist between networked civil infrastructures such as transportation and power system networks. In order to analyze complex nonlinear correlations between such networks, datasets pertaining to such real infrastructures are required. However, such data are not readily available due to their proprietary nature. This work proposes a methodology to generate realistic synthetic power distribution networks for a given geographical region. A network generated in this manner is not the actual distribution system, but its functionality is very similar to the real distribution network. The synthetic network connects high voltage substations to individual residential consumers through primary and secondary distribution networks. Here, the distribution network is generated by solving an optimization problem which minimizes the overall length of the network subject to structural and power flow constraints. This work also incorporates identification of long high voltage feeders originating from substations and connecting remotely situated customers in rural geographic locations while maintaining voltage regulation within acceptable limits. The proposed methodology is applied to the state of Virginia and creates synthetic distribution networks which are validated by comparing them to actual power distribution networks at the same location.
\end{abstract}

\begin{IEEEkeywords}
synthetic distribution networks, radial networks, Mixed Integer Linear Programming
\end{IEEEkeywords}

\section{Introduction}
Networked infrastructures are vital components of society and instrumental for its proper functioning~\cite{netinfr}. A common structural attribute of such networked infrastructures is that they are located close to human settlements. Examples of some of these networked infrastructures include power system network, communication network, transportation network etc. Furthermore, the interdependencies among these networked infrastructures are well known~\cite{Adiga_2020}. 

The US government has termed these networks as critical infrastructures since they provide \emph{enabling functions} across other infrastructure sectors~\cite{ppd21}. A failure in one network can lead to cascading events in other dependent networked infrastructures and eventually cause an immense impact on the nation's economy. In recent times, there has been an increased interest in studying the resilience and reliability of such networks during extreme events. The complex non-linear correlations between different networks have drawn the attention of many researchers~\cite{Adiga_2020}. However, a fundamental barrier to progress in this area is the lack of data pertaining to such networked infrastructures. 

On another front, emerging technologies and policies are being proposed to make these networks resilient to failures. For example, in the case of a power distribution system, the role of distributed energy resources (DERs) is being widely acknowledged as an efficient means to enhance the network resiliency. The focus of power systems research has moved towards efficient deployment and control of DERs in the distribution system~\cite{manish2019,manish2018OptimalDS,rad_prot,rad_der,radiality_2012}. However, there remains a lack of realistic test case scenarios to implement the proposed methodologies for their validation. Most of the algorithms and technologies are validated against standard IEEE test feeders or small-sized distribution networks (e.g. 13, 37, 123 bus feeders etc.~\cite{feeders}). The unavailability of appropriate network data is primarily due to the proprietary nature of the data as well as its enormous volume~\cite{review2017}.

\noindent\textbf{Problem} 
The abstract problem we study can be formally stated as: \emph{Given a set of residences and electric substations with their respective geographical coordinates, construct a power distribution network connecting these points that resembles a real operating distribution system network.}
This work proposes a methodology to generate realistic synthetic distribution networks for a geographical region using open source information. The generated network connects substations to individual residential buildings through medium voltage (MV) and low-voltage (LV) networks. The synthetic networks are not exact replicas of the actual distribution network; rather they resemble the actual network in terms of structural properties. At the same time, the generated network satisfies the operational constraints (voltage and power flows limits) of an actual operating distribution system. 

\begin{table*}[htb]
	\centering
	\caption{Major contributions of proposed work}
	\begin{tabular}{p{5em}p{25em}p{25em}}
		\toprule
		\textbf{Aspect} & \textbf{Previous Works} & \textbf{Present work} 
		\\\midrule
		Network type & Synthetic transmission networks with generators and aggregated loads~\cite{overbye_101,overbye_102,trpovski_2018,gm2016,rnm_2011} & Synthetic distribution networks are created which connect high voltage substations to individual residential consumers.
		\\\midrule
		\multirow{3}{5em}{Realism of generated networks} 
		& \multicolumn{1}{p{25em}}{Statistical distribution of network attributes are used to generate synthetic power grids~\cite{schweitzer}}
		& \multirow{3}{25em}{Realistic distribution networks comprising of primary and secondary levels are generated for a given geographical location. The generated network resembles an optimal network designed by power distribution companies. It follows the usual structural and power flow constraints of a typical distribution system.}
		\\\cmidrule{2-2}
		& \multicolumn{1}{p{25em}}{Stochastic geometry based approach to place transformers in distribution networks~\cite{overbye_2019}}&\\\cmidrule{2-2}
		& \multicolumn{1}{p{25em}}{Heuristic approach to synthesize distribution networks from substations and populate them with consumer loads~\cite{trpovski_2018}}
		&
		\\\midrule
		Radiality of network 
		& Radiality is ensured by avoiding isolated cycles or considering single commodity flow model~\cite{radiality_1987,radiality_2012,lei2019radiality,wang2019radial}. 
		& Power balance constraint is proved to be a sufficient condition to ensure radiality of the generated LV network. 
		\\\midrule
		\multirow{2}{5em}{Network attributes} 
		& \multicolumn{1}{p{25em}}{Small sized networks such as standard IEEE test systems are used~\cite{manish2019,manish2018OptimalDS}.}
		& \multirow{2}{25em}{An optimal radial network is identified for an unknown number of root nodes (high voltage feeders) and for large-sized networks with more than 20000 nodes.} \\\cmidrule{2-2}
		& \multicolumn{1}{p{25em}}{The number of root nodes (substations) are known beforehand in the problem definition.} 
		&  
		\\\bottomrule
	\end{tabular}
	\label{tab:contrib}
\end{table*}
\noindent\textbf{Contribution}~Our main contributions are: (i) We develop a \emph{first principles}-based methodology to create a realistic synthetic distribution network using information from other infrastructures such as transportation networks, residential data, etc.; (ii) Our approach results in an optimal network by minimizing the overall length of distribution lines which is a principal consideration of power companies while planning distribution networks; (iii) Our method generates a distribution network which is particular to the geographical location of interest and hence provides a realistic representation of the actual network; (iv) The nodes and edges of the generated network are labeled with all necessary attributes required for power flow analysis and therefore, can be used as suitable networks to test distribution system planning and operation algorithms.

Our approach combines real world data such as building information and road network data with general power engineering practices to construct realistic distribution systems. Instead of considering aggregated loads at zipcode centers, our method generates a detailed distribution network for a given geographical area. The economic aspect of minimizing the total length of the network (which is indicative of the investment required for building a network) has been considered as the primary objective. We include constraints motivated by power engineering practices such as maintaining a tree structure (for protection coordination), placing local transformers (pole-top and pad-mounted) along the road network and avoiding branching in the secondary network (to reduce voltage sag at the leaf nodes). We also use detailed synthetic load demand profiles from a recent model in the literature~\cite{swapna_2018}. The inclusion of all these aspects has allowed us to validate power flow characteristics for the created synthetic networks.

The geographical attributes of the region are retained intact in our methodology. For example, in a rural region with multiple remote localities, a single substation feeder serving all of them would often suffer from voltage sag issues at the extreme ends. In this paper, we consider several substation feeders serving such remote localities. In this way, we ensure that the node voltages and edge flows are within acceptable limits while the radial structure of the network is maintained. The key differences of our method with other related previous methods to generate synthetic distribution networks are listed in Table~\ref{tab:contrib}.

\section{Preliminaries}\label{sec:prelim}
\subsection{Distribution system}\label{ssec:dist}
The distribution network consisting of overhead power lines, underground cables, pole top transformers is responsible to bring electrical power from high voltage (HV; greater than 33kV) transmission system to the end residential consumers requiring a LV level (208-480V). This is normally done in a two-step procedure: (i) the high transmission level voltage is stepped down to MV level at distribution substations and distributed to local transformers (pole-top/pad-mounted) through \emph{primary distribution network}, (ii) the voltage is further stepped down to LV at the local distribution transformers and distributed to individual customers through \emph{secondary distribution network}. 

\begin{figure}
	\centering
	\includegraphics[width=0.235\textwidth]{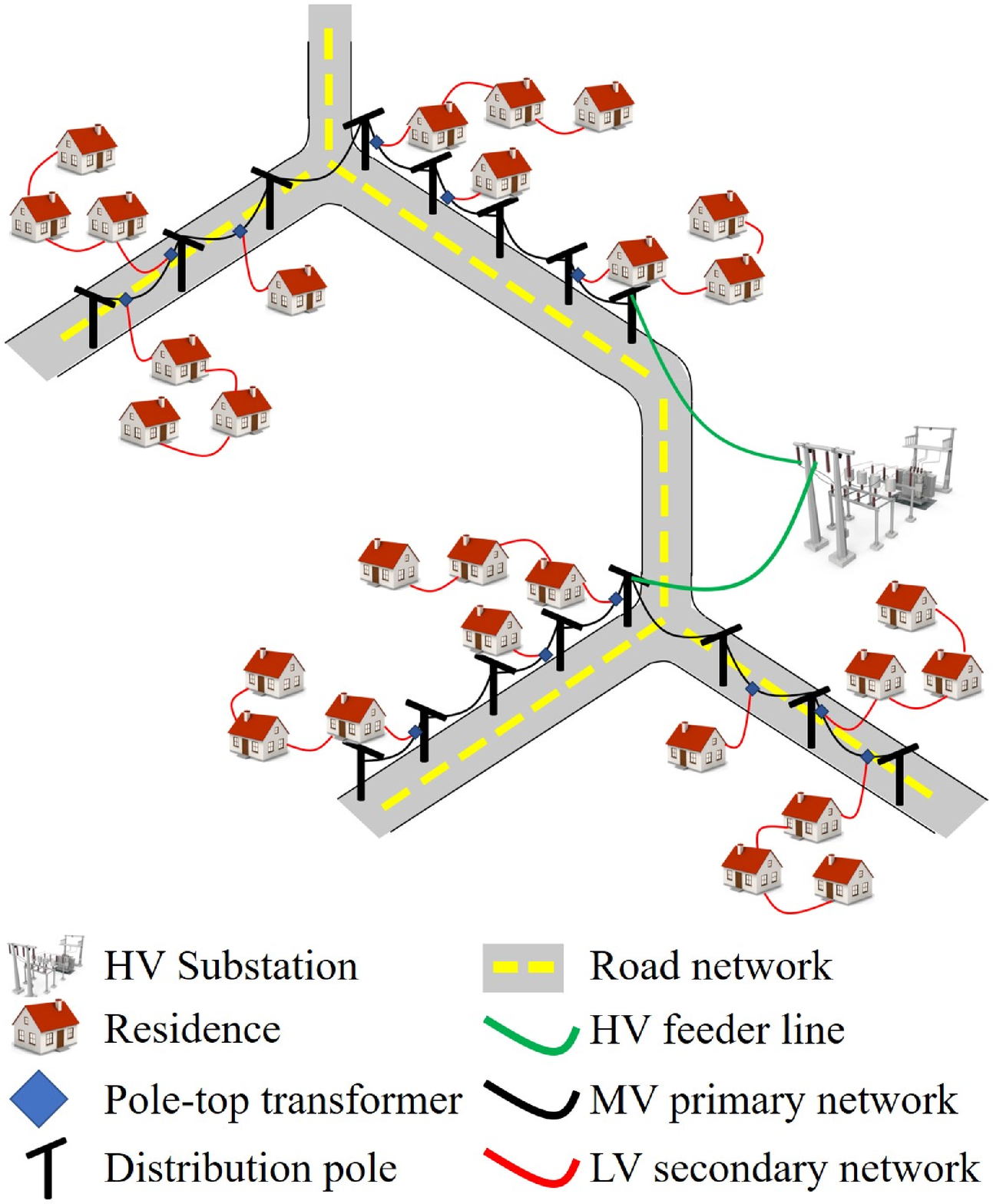}
	\includegraphics[width=0.235\textwidth]{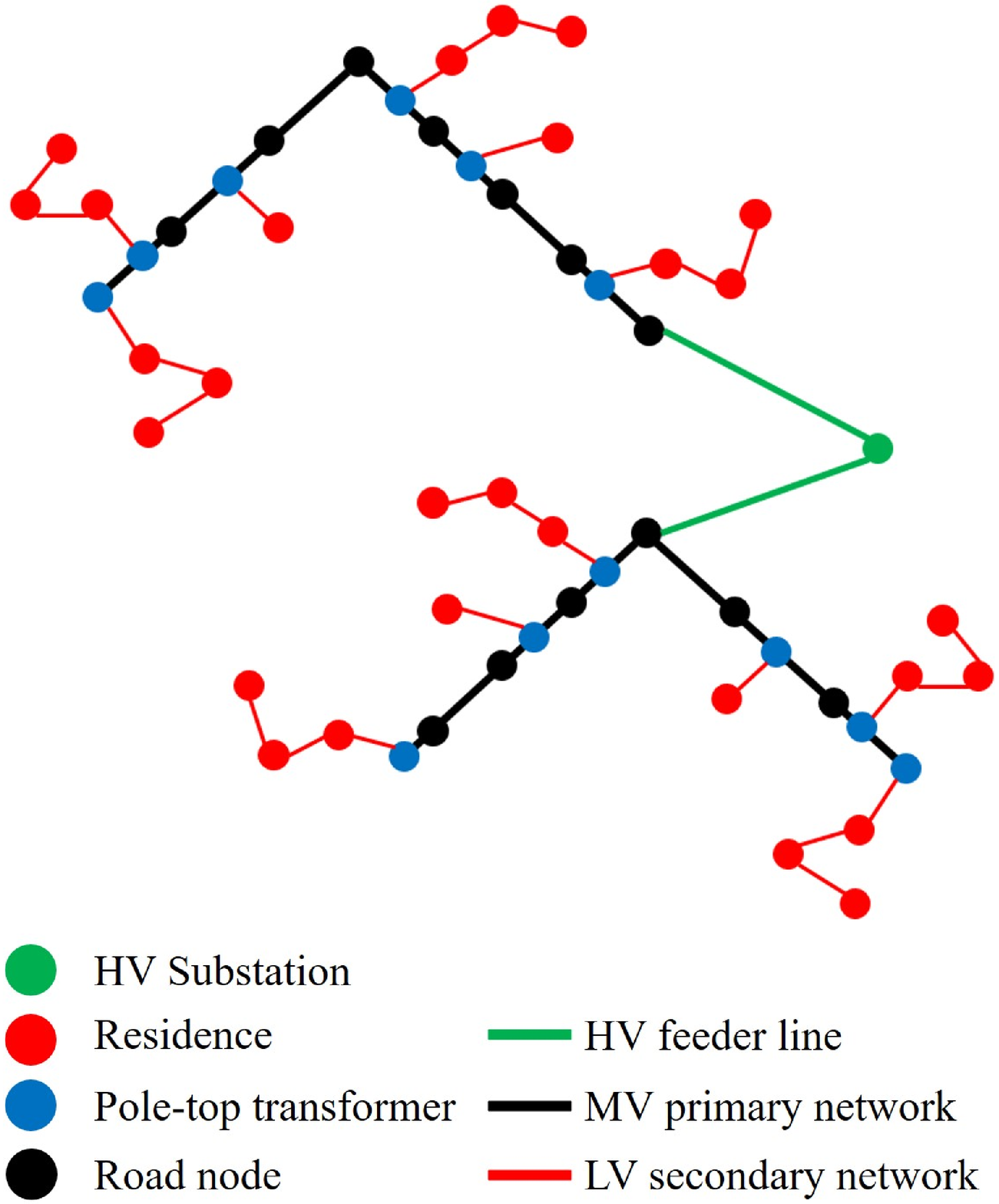}
	\caption{A schematic of synthetic distribution network. The substation feeds the distribution network through high voltage feeder lines. The primary network originates from the feeder lines and follows the road network as much as possible. The secondary network originates from transformers along primary network and connects individual residences. The single line diagram for the same network with its individual elements is shown on the right figure.}
	\label{fig:teaser}
\end{figure}
Distribution systems (primary and secondary) are usually configured in a \emph{radial} structure where power flow is unidirectional from source to consumers. Such radial structure facilitates protection coordination among reclosures, breakers and downstream fuses in the distribution system feeders~\cite{rad_prot}. Fig.~\ref{fig:teaser} shows a schematic of the structure of synthetic distribution network. The network connects substation to individual residential buildings through primary and secondary networks. The substation first feeds the primary network through high voltage feeder lines. It is assumed that the primary network follows the available road network to the maximum extent since distribution poles are generally placed along the road network to facilitate maintenance. The secondary network originates from the pole top transformers which are placed along primary network. The secondary network connects individual residences through short chains. Branching in these chains is avoided in order to reduce voltage drop at the leaf nodes.

\subsection{Available datasets}\label{ssec:data}
We use open source publicly available information regarding several infrastructures to generate the synthetic distribution networks. These data pertain to following sources: (i) transportation network data published by~\cite{navteq}, (ii) Geographical location of HV substations from data sets published by~\cite{eia_substations} and (iii) Residential electric power demand information developed in the models by~\cite{swapna_2018}.
\begin{table}[htb]
	\centering
	\caption{Datasets and related attributes used to generate synthetic distribution network}
	\label{tab:dataset-intro}
	\begin{small}
		\begin{tabular}{cllc}
			\toprule
			\textbf{Dataset} & \multicolumn{1}{c}{\textbf{Source}} & \multicolumn{1}{c}{\textbf{Attributes}} \\ \midrule
			Substation & \begin{tabular}[c]{@{}l@{}}Electric substation data\\ published by US \\Department of \\Homeland Security~\cite{eia_substations}\end{tabular} & \begin{tabular}[c]{@{}l@{}}\tabitem substation ID\\\tabitem longitude\\\tabitem latitude \end{tabular}\\ \midrule
			\begin{tabular}[c]{@{}c@{}}Road \\ network\end{tabular} & \begin{tabular}[c]{@{}l@{}}GIS and electronic \\navigable maps \\published by \\NAVTEQ~\cite{navteq}\end{tabular} & \begin{tabular}[c]{@{}l@{}}\tabitem node ID\\\tabitem node longitude\\\tabitem node latitude\\\tabitem link ID\\\tabitem link importance\\~~level \end{tabular} \\ \hline
			\multicolumn{1}{l}{Residences} & \begin{tabular}[c]{@{}l@{}}Synthetic population \\and electric load \\demand profiles\\ generated by~\cite{swapna_2018}\end{tabular} & \begin{tabular}[c]{@{}l@{}}\tabitem residence ID\\\tabitem longitude\\\tabitem latitude\\ \tabitem average \\~~hourly load \\~~demand\end{tabular} \\ \hline
		\end{tabular}
	\end{small}
\end{table}

\noindent\textbf{Roads} The road network represented in the form of a graph $\mathscr{R}=(\mathscr{V}_R,\mathscr{L}_R)$, where $\mathscr{V}_R$ and $\mathscr{L}_R$ are respectively the sets of nodes and links of the network. Each road link $l\in\mathscr{L}_R$ is represented as an unordered pair of terminal nodes $(u,v)$ with $u,v\in\mathscr{V}_R$. Each road node has a spatial embedding in form of longitude and latitude. 

\noindent\textbf{Substations} The set of $M$ substations $\mathscr{S}=\{s_1,s_2,\cdots,s_M\}$ with their respective geographical location data.

\noindent\textbf{Residences} The set of $N$ residential buildings with geographical coordinates $\mathscr{H}=\{h_1,h_2,\cdots,h_N\}$ and hourly load profile.

\noindent\textbf{Formal Problem Statement.}~Given a set of residence and electric substation locations, construct a power distribution network connecting these points such that (i) the network graph is a forest of trees originating from substations, (ii) covers all residential buildings, (iii) the edge flows and node voltages are within acceptable limits and (iv) overall length of the network is minimized. 

\section{Proposed Approach}\label{sec:approach}
The problem of creating a synthetic distribution network can be considered to be a combinatorial optimization problem, where we identify the optimal edges from a possible edge set connecting residences to substations. Due to the large number of points to be connected, the combinatorial problem is computationally challenging. As discussed in Section~\ref{sec:prelim}, a typical distribution system consists of primary and secondary networks. Therefore, the synthesis of such networks is considered to be a two-step bottom-up procedure: (i) the first step constructs the secondary distribution network connecting the residential buildings to pole-top/pad-mounted transformers and (ii) the second step involves connecting these local transformers to distribution substations through feeders and laterals. In our work, the road network is used as a proxy for the primary distribution network. Therefore, the locations of local pole-top transformers are considered to be internal points on the road network links.

The overall algorithm to generate synthetic distribution networks is outlined through the following points:
(i) Evaluate a mapping between residential buildings and road network links such that each residence is mapped to the nearest road link (\ref{ssec:map}). Thereafter, we identify probable locations of local distribution transformers along these road network links.
(ii) Connect the local distribution transformers to mapped residences in a radial configuration to resemble a typical secondary distribution network (\ref{ssec:secondary}).
(iii) Connect the local transformers to distribution substation in a forest of trees with the road network as a proxy (\ref{ssec:primary}).

\subsection{Mapping residences to road links}\label{ssec:map}
This section details the methodology to identify subsets of residential buildings near each road network link. This information would be used in the succeeding step to generate the secondary distribution network.
\begin{figure}[htbp]
	\centering
	\includegraphics[width=0.24\textwidth]{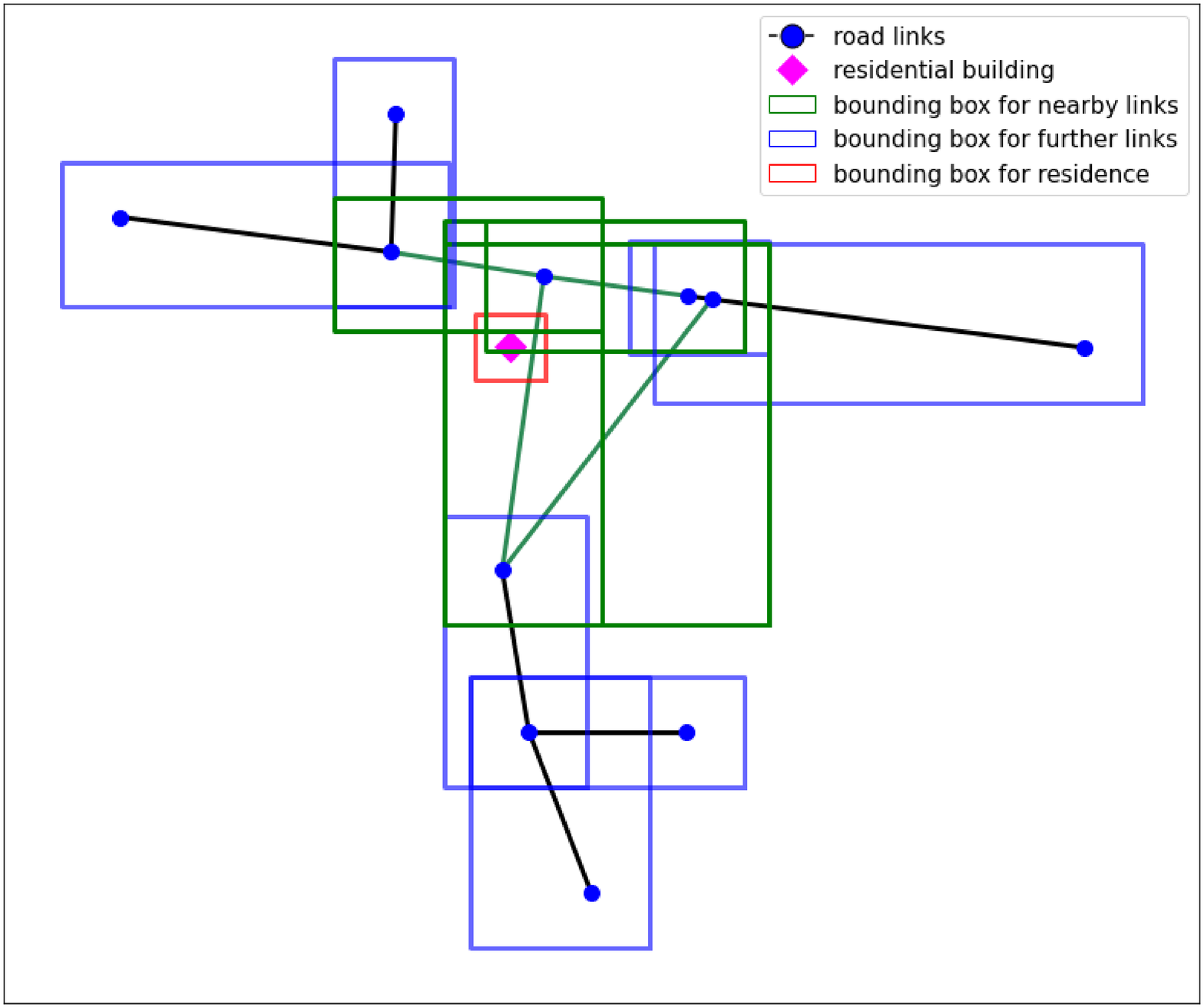}
	\includegraphics[width=0.24\textwidth]{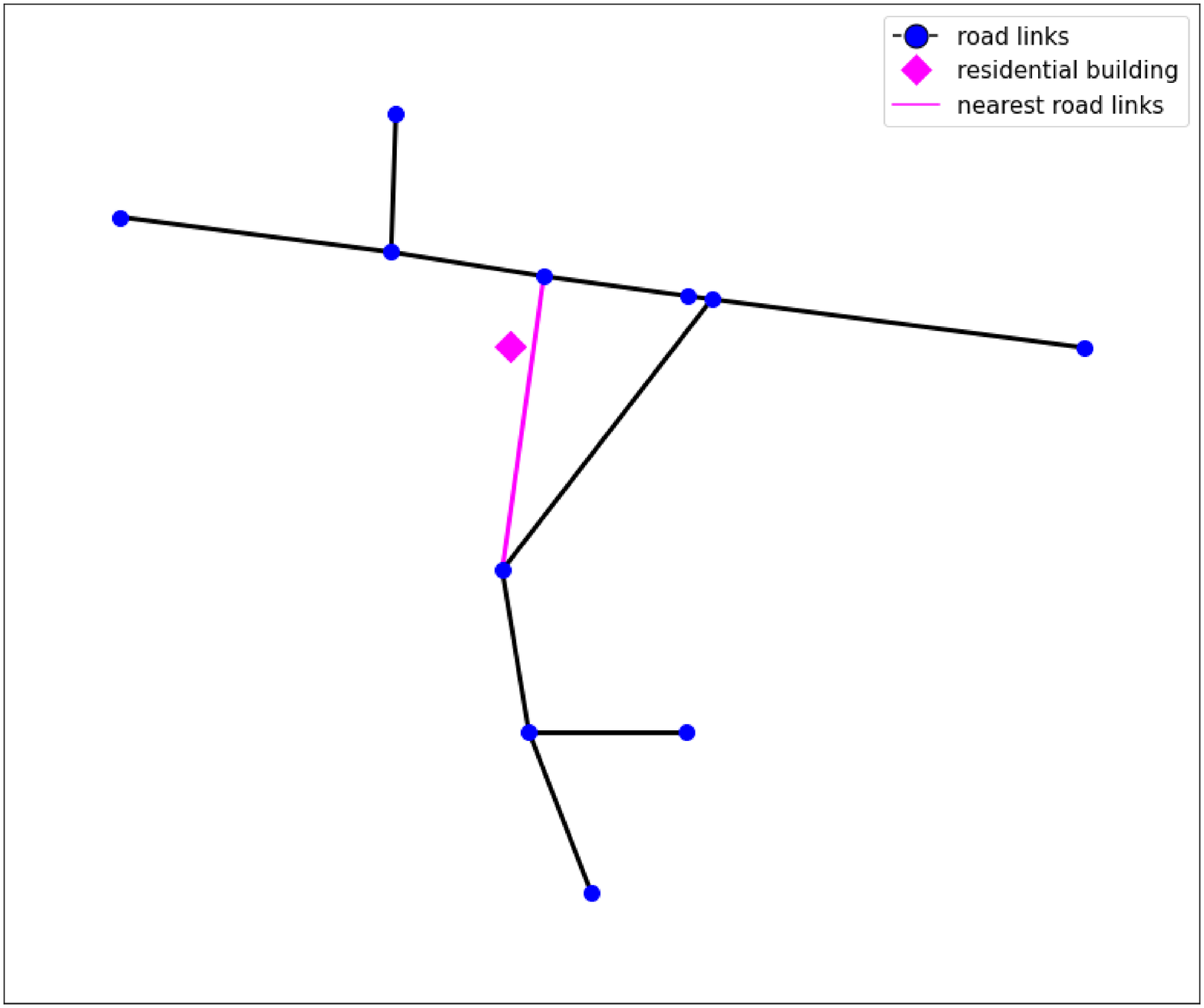}
	\caption{Steps showing procedure of Algorithm~\ref{alg:dist}. Bounding boxes are drawn around each road link. The nearby links to a given residence point are short-listed by identifying the boxes (green) which intersect with the one around the residence point (red). The nearest link is the closest among these shortlisted links.}
	\label{fig:mapping}
\end{figure}

Algorithm~\ref{alg:dist} lists the steps to compute the nearest road network link to a given point. Fig~\ref{fig:mapping} provides a visual representation of the same. First, we draw bounding boxes around each road link and residences. The intersections between the bounding box of a residence with those of the links are stored and indexed in a \emph{quad-tree} data structure. This information is retrieved to identify the links (green links in Fig.\ref{fig:mapping}) which are comparably nearer to the residential building than the others. Now, the geodesic distance of the residence point can be computed from these short-listed links and the nearest link can be identified henceforth. This approach reduces the computational burden of evaluating the distance between all road links and residential buildings.

\begin{algorithm}
	\caption{Find the nearest link in $\mathscr{L}_R$ to a given point $\mathbf{p}$.}
	\label{alg:dist}
	\begin{algorithmic}[1]
		\REQUIRE Dimensions of bounding box.
		\STATE Draw bounding box $\mathbf{B_l}$ around each link $\mathbf{l}\in\mathscr{L}_R$.
		\STATE Draw bounding box $\mathbf{B_p}$ around point $\mathbf{p}$.
		\STATE Find the bounding boxes $\mathbf{B_{l_1}},\mathbf{B_{l_2}},\cdots,\mathbf{B_{l_k}}$ corresponding to the links $\mathbf{l_1},\mathbf{l_2},\cdots,\mathbf{l_k}$ which intersect with $\mathbf{B_p}$.
		\STATE Find the link $\mathbf{l^\star}$ among $k$ short-listed links, which is nearest to point $\mathbf{p}$.
	\end{algorithmic}
\end{algorithm}

Following this step, we can compute an inverse map which represents the nearby residences mapped to a given road link. It is assumed that the residences would be fed by local pole top transformers located along the mapped link. However, their locations are not known beforehand. Therefore, we consider their probable locations along the link by interpolating multiple points at equal separation along it (based on standard engineering practice~\cite{pansini}).

\subsection{Creation of Secondary Network}\label{ssec:secondary}
The goal of this step is to create the secondary distribution network connecting local transformers along a given road network link $l$ to the set of $\mathscr{V}_H$ residences mapped to it. The set of probable local transformers interpolated along the link is $\mathscr{V}_T$. We need to connect the set of residences $\mathscr{V}_H$ to actual transformer nodes $\mathscr{V}_T^\star\subseteq\mathscr{V}_T$ in a forest of \emph{starlike} trees with each tree rooted at a transformer node. 

We start with a candidate set $\mathscr{E}_D$ of edges between the residence and probable transformer nodes. The secondary network edges are to be selected from this candidate set. We construct an undirected graph $\mathscr{G}_D:=(\mathscr{V},\mathscr{E}_D)$ with node set $\mathscr{V}=\mathscr{V}_T\bigcup\mathscr{V}_H$ and edge set $\mathscr{E}_D$. The aim of this step can be formalized through the following problem statement.

\begin{problem}[Secondary network creation problem]
	Given the undirected graph $\mathscr{G}_D(\mathscr{V},\mathscr{E}_D)$, find $\mathscr{E}_D^{\star}\subseteq\mathscr{E}_D$ such that the induced subgraph network $\mathscr{G}_D^\star(\mathscr{V}^\star,\mathscr{E}_D^\star)$ with $\mathscr{V}^\star=\mathscr{V}_T^\star\bigcup\mathscr{V}_H$ is a forest of starlike trees, $\mathscr{V}_T^\star\subseteq\mathscr{V}_T$ is the set of root nodes and the overall length of network is minimized.
\end{problem}
Note that a complete graph composed of the residence and transformer nodes can always be considered as the candidate set $\mathscr{E}_D$. In this paper, a Delaunay triangulation~\cite{delaunay} of the residential nodes is considered to reduce size of the problem.
% \anil{shouldn't there be a constraint on the total cost in the formulation?}

\noindent\textbf{Edge weight assignment} An edge $e:(i,j)\in\mathscr{E}_D$ is assigned a weight $w(i,j)$
\begin{equation}
	w(i,j)=
	\begin{cases}
		\infty,\quad\quad\quad\quad\quad\quad\quad\quad\textrm{if }i,j\in\mathscr{V}_T\\
		\mathsf{dist}(i,j)+\lambda \mathsf{C}(i,j),~\textrm{otherwise}
	\end{cases}
	\label{eq:weight}
\end{equation}
where $\mathsf{dist}:\mathscr{V}\times\mathscr{V}\rightarrow\mathbb{R}$ denotes the geodesic distance between the nodes $i,j$. The function $\mathsf{C}(u,v)$ penalizes the cost function if the edge connecting nodes $u,v$ crosses the road link and is defined as
$$
\mathsf{C}(u,v)=
\begin{cases}
0,\quad \textrm{if }u,v\textrm{ are on same side of link}\\
2,\quad \textrm{if }u,v\textrm{ are on opposite side of link}\\
1,\quad \textrm{if }u\in\mathscr{V}_R\textrm{ or }v\in\mathscr{V}_R
\end{cases}
$$
$\lambda$ is a weight factor to penalize multiple crossing of edges over the road links. It also penalizes multiple edges emerging from the root node. The weights are stacked in $|\mathscr{E}_D|$-length vector $\mathbf{w}$. Note that an edge between two probable transformer nodes is assigned a weight of \emph{infinity} which is equivalent of not considering them as candidate edges.

\noindent\textbf{Edge variables} We introduce binary variables $\{x_e\}_{e\in\mathscr{E}_D}\in\{0,1\}^{|\mathscr{E}_D|}$. Variable $x_e=1$ indicates that the edge is present in the optimal topology and vice versa.
Each edge $e:=(i,j)$ is assigned a flow variable $f_e$ (arbitrarily) directed from node $i$ to node $j$. The binary variable and flows can be stacked in $|\mathscr{E}_D|$-length vectors $\mathbf{x}$ and $\mathbf{f}$ respectively.
% \anil{is the network $\mathscr{G}_D^\star$ being constructed directed or undirected? or maybe bidirected? This is relevant because you are using the source-destination terminology in the para above}

\noindent\textbf{Node variables} The average hourly load demand at the $i^{th}$ residence node is denoted by $p_i$ and is strictly positive. We stack these average hourly load demands at all residence nodes in a $|\mathscr{V}_H|$ length vector $\mathbf{p}$.

\noindent\textbf{Degree constraint.}~Statistical surveys on distribution networks in~\cite{mv_2011,review2017} show that residences along the secondary network are mostly connected in series with at most $2$ neighbors. This is ensured by (\ref{eq:sec-degree}) which limits degree of residence nodes to $2$.
\begin{equation}
	\sum_{e:(h,j)}x_{e}\leq 2,\quad \forall h\in\mathscr{V}_{H}\label{eq:sec-degree}
\end{equation}

\noindent\textbf{Power flow constraints.}~For the connected graph $\mathscr{G}_D(\mathscr{V},\mathscr{E}_D)$, we define the $|\mathscr{E}_D|\times|\mathscr{V}|$ branch-bus incidence matrix $\mathbf{A}_{\mathscr{G}_D}$ with the entries as
\begin{equation}
	\begin{matrix}
		\mathsf{A}_{\mathscr{G}_D}(e,k):=
		\begin{cases}
			~~1,\quad k=i\\-1,\quad k=j\\~~0,\quad\textrm{otherwise}
		\end{cases}
		&\forall{e=(i,j)\in\mathscr{E}_D}
	\end{matrix}
	\label{eq:bus-incidence}
\end{equation}
Since the order of rows and columns in $\mathbf{A}_{\mathscr{G}_D}$ is arbitrary, we can partition the columns as $\mathbf{A}_{\mathscr{G}_D}=\begin{bmatrix}\mathbf{A_{T}}&\mathbf{A_{H}}\end{bmatrix}$, without loss of generality, where the partitions are the columns corresponding to transformer and residence nodes respectively. We define $\mathbf{A_{H}}$ as the reduced branch-bus incidence matrix.

Assuming no network losses, (\ref{eq:pf-sec1}) represents the power balance equations at all residence nodes. Note that the optimal network is obtained from $\mathscr{G}_D$ after removing the edges for which $x_e=0$. Therefore, we need to enforce zero flows $f_e$ for non-existing edges. The constraint (\ref{eq:pf-sec2}) performs this task along with constraining the flows $f_e$ for existing edges to be within pre-specified capacities $\overline{f}$.
\begin{subequations}
	\begin{align}
		&\mathbf{A_H}^T\mathbf{f}=\mathbf{p}\label{eq:pf-sec1}\\
		-&\overline{f}\mathbf{x}\leq \mathbf{f}\leq \overline{f}\mathbf{x}\label{eq:pf-sec2}
	\end{align}
	\label{eq:pf-sec}
\end{subequations}

\noindent\textbf{Ensuring radial topology} The radiality requirement of $\mathscr{G}_D^\star$ can be enforced from a known graph theory property: \emph{a forest with $n$ nodes and $m$ root nodes has $n-m$ edges}. In our case, $|\mathscr{V}_H|+|\mathscr{V}_T^\star|$ nodes need to be covered in a forest of trees with $|\mathscr{V}_T^\star|$ root nodes which leads us to the following constraint.
\begin{equation}
	\sum_{e\in\mathscr{E}_D}x_e=|\mathscr{V}_H|\label{eq:radial}
\end{equation}

However, we need to ensure that there are no disconnected cycles in the optimal network. This can be done by ensuring that the residence points are connected to a transformer node~\cite{manish2019,lei2019radiality}. In our case, this condition is satisfied by the node power flow condition in (\ref{eq:pf-sec1}) if all the residential nodes consume power. This is an extension of the proposition in~\cite{manish2019} which considers renewable generation at the nodes. However, contrary to the previous work, the following proposition is a special case where every non-root node has only positive load demand. Using the power balance constraints to ensure radial topology reduces the number of constraints when dealing with large sized networks.

The optimal secondary network graph $\mathscr{G}_D^\star(\mathscr{V}^\star,\mathscr{E}_D^\star)$ with $\mathscr{E}_D^\star=\mathscr{E}_D\setminus\{e:e\in\mathscr{E}_D, x_e=0\}$ and $\mathscr{V}^\star=\mathscr{V}_T^\star\bigcup\mathscr{V}_H$ has at least $|\mathscr{V}_T^\star|$ components since local transformer nodes can only be root nodes and there is no candidate edge between them.
\begin{proposition}
	The graph $\mathscr{G}_D^\star(\mathscr{V}^\star,\mathscr{E}_D^\star)$ with reduced branch-bus incidence matrix $\mathbf{A_H}$ (corresponding to columns of $\mathscr{V}_H$) and node power demand vector $\mathbf{p}\in\mathbb{R}^{\mathscr{V}_H}$, with strictly positive entries, has exactly $|\mathscr{V}_T^\star|$ connected components if and only if there exists $\mathbf{f}\in\mathbb{R}^{|\mathscr{E}_D^\star|}$, such that (\ref{eq:pf-sec1}) is satisfied.
	\label{prop-1}
\end{proposition}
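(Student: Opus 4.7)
My plan is to exploit the block-diagonal structure of $\mathbf{A_H}^T$ induced by the connected components of $\mathscr{G}_D^\star$ and reduce feasibility of the flow equations to a per-component rank condition.

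First, I would reorder edges and nodes so that the system $\mathbf{A_H}^T\mathbf{f}=\mathbf{p}$ decouples into one subsystem $\mathbf{A}_{H_C}^T\mathbf{f}_C=\mathbf{p}_C$ for each connected component $C$ of $\mathscr{G}_D^\star$, where $H_C$ collects the residence nodes in $C$ and $\mathbf{A}_{H_C}$ is the corresponding submatrix of the incidence matrix $\mathbf{A}_C$ of $C$. A global flow exists if and only if every such subsystem is solvable.

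Second, I would carry out a rank analysis on each component. For a connected graph on $|V_C|$ vertices, $\mathbf{A}_C$ has rank $|V_C|-1$ with null space spanned by $\mathbf{1}$, so any strict subset of its columns is linearly independent. If $C$ contains no transformer, then $\mathbf{A}_{H_C}=\mathbf{A}_C$ and the range of $\mathbf{A}_{H_C}^T$ is the hyperplane $\{v:\mathbf{1}^T v=0\}$; since $\mathbf{p}_C$ has strictly positive entries, its coordinates sum to a positive number, so the subsystem is infeasible. If $C$ contains at least one transformer, the residence columns form a strict subset of the columns of $\mathbf{A}_C$, are linearly independent, and hence $\mathbf{A}_{H_C}^T$ is surjective, so the subsystem is solvable for every $\mathbf{p}_C$. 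An explicit feasible flow can then be constructed by choosing a spanning tree of $C$, rooting it at a transformer, and setting each edge flow to the sum of downstream residence demands. This establishes the key intermediate fact: $\mathbf{A_H}^T\mathbf{f}=\mathbf{p}$ is solvable if and only if every connected component of $\mathscr{G}_D^\star$ contains at least one transformer node.

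Third, I would close the biconditional using the structural fact, stated just before the proposition, that $\mathscr{G}_D^\star$ has at least $|\mathscr{V}_T^\star|$ components because each transformer is a root node and no candidate edge joins two transformers. For the direction (flow exists $\Rightarrow$ exactly $|\mathscr{V}_T^\star|$ components), Step 2 forces every component to contain at least one transformer, so the number of components is at most $|\mathscr{V}_T^\star|$; combined with the a priori lower bound we obtain equality. For the converse, equality with the lower bound, which arises because each transformer resides in a distinct component, tightens to exactly one transformer per component, and Step 2 then produces the required flow.

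The main obstacle I anticipate is the final counting step: translating the bound ``at least $|\mathscr{V}_T^\star|$ components'' into the assertion that, under equality, each component contains exactly one transformer. This is precisely where the no-edge-between-transformers hypothesis must be used carefully, to rule out configurations in which one component absorbs several transformers while another contains none; the rank dichotomy from Step 2 is what ultimately rules out such degenerate allocations.
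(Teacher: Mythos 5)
Your forward direction is, in substance, the paper's own proof. The paper argues by contradiction: if there were more than $|\mathscr{V}_T^\star|$ components together with a feasible $\mathbf{f}$, some component $\mathscr{G}_{\mathscr{S}}$ would consist of residence nodes only, the incidence matrix would block-decompose, and $\mathbf{1}^T\mathbf{p}_{\mathscr{S}}=\mathbf{1}^T\mathbf{A}_{\mathscr{S}}^T\mathbf{f}_{\mathscr{S}}=0$ would contradict strict positivity of $\mathbf{p}$ --- exactly the infeasibility half of your rank dichotomy, stated contrapositively and combined with the a priori lower bound of $|\mathscr{V}_T^\star|$ components. Your sufficiency half (surjectivity of $\mathbf{A}_{H_C}^T$ when a component contains a transformer, or the explicit spanning-tree flow rooted at a transformer) is genuinely additional: the paper does not write out that direction, deferring to the argument in \cite{manish2019}, so your plan would make the ``if and only if'' self-contained.

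The gap is in how you close the converse. You claim the rank dichotomy of Step 2 ``ultimately rules out'' the degenerate allocation in which one component absorbs several transformers while another contains none. It cannot: in that direction the hypothesis is only the component count, not the existence of a flow, so infeasibility on a transformer-free component is not a contradiction --- it is exactly the configuration you must show does not arise. Concretely, a graph with components $\{t_1,h_1,t_2\}$ (edges $t_1h_1$ and $h_1t_2$, no direct transformer--transformer edge) and a separate residence-only path has exactly $|\mathscr{V}_T^\star|=2$ components, is consistent with the candidate-edge structure and even with the edge-count constraint (\ref{eq:radial}), yet admits no $\mathbf{f}$ satisfying (\ref{eq:pf-sec1}). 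What actually rescues the converse is the structural premise stated just before the proposition --- transformer nodes can only be roots, so distinct transformers lie in distinct components --- which is the clause you invoke when you write ``each transformer resides in a distinct component''; granted that premise, exactly $|\mathscr{V}_T^\star|$ components forces one transformer per component and your Step 2 construction finishes the proof. So either adopt that premise explicitly as part of the setup (as the paper implicitly does) or derive it from the intended root/tree structure, but do not attribute it to the rank argument, and be aware that ``no candidate edge between transformers'' by itself does not exclude two transformers joined through residence nodes.
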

\begin{proof}
	The proof follows along similar lines to proof of Proposition~1 in~\cite{manish2019}. Proving by contradiction, suppose $\mathscr{G}_D^\star(\mathscr{V}^\star,\mathscr{E}_D^\star)$ has more than $|\mathscr{V}_T^\star|$ connected components and there exists $\mathbf{f}\in\mathbb{R}^{|\mathscr{E}_D^\star|}$ satisfying the proposed equality. Therefore, there exists a connected component $\mathscr{G}_{\mathscr{S}}(\mathscr{V}_{\mathscr{S}},\mathscr{E}_{\mathscr{S}})$ which is a maximal connected subgraph with $\mathscr{V}_{\mathscr{S}}\subset\mathscr{V}_H$ and $\mathscr{V}_{\mathscr{S}}\bigcap\mathscr{V}_T^\star=\emptyset$. Let $\mathbf{A}_\mathscr{S}$ denote the bus incidence matrix of ${\mathscr{G}}_{\mathscr{S}}$. By definition, it holds that $\mathbf{A}_\mathscr{S}\mathbf{1}=\mathbf{0}$. 
	
	Since graph $\mathscr{G}_{\mathscr{S}}(\mathscr{V}_{\mathscr{S}},\mathscr{E}_{\mathscr{S}})$ is a maximal connected subgraph of $\mathscr{G}_D^\star$, there exists no edge $(i,j)$ with $i\in\mathscr{V}_\mathscr{S}$ and $j\in\mathscr{V}_{\overline{\mathscr{S}}}$, where $\mathscr{V}_{\overline{\mathscr{S}}}=\mathscr{V}^\star\setminus\mathscr{V}_{\mathscr{S}}$. Since the order of rows and columns of $\mathbf{A_H}$ are arbitrary, we can partition without loss of generality as
	\begin{equation}
		\mathbf{A_H}=\begin{bmatrix}\mathbf{A}_{\overline{\mathscr{S}}}&\mathbf{0}\\\mathbf{0}&\mathbf{A}_{\mathscr{S}}\end{bmatrix}\notag
	\end{equation}
	We can partition vectors $\mathbf{f}$ and $\mathbf{p}$ conformably to $\mathbf{A_H}$ to get the following equality %\anil{RHS should be flipped}
	\begin{equation}
		\begin{bmatrix}\mathbf{A}_{\overline{\mathscr{S}}}&\mathbf{0}\\\mathbf{0}&\mathbf{A}_{\mathscr{S}}\end{bmatrix}^T\begin{bmatrix}\mathbf{f}_{\overline{\mathscr{S}}}\\\mathbf{f}_{\mathscr{S}}\end{bmatrix}=\begin{bmatrix}\mathbf{p}_{{\overline{\mathscr{S}}}}\\\mathbf{p}_{\mathscr{S}}\end{bmatrix}\notag
	\end{equation}
	where $\mathbf{p}_{\mathscr{S}},\mathbf{p}_{{\overline{\mathscr{S}}}}$ are respectively the vectorized power demands for node sets $\mathscr{V}_\mathscr{S}$ and $i\in\mathscr{V}_{\overline{\mathscr{S}}}$. From the second block, it implies that 
	\begin{equation}
		\mathbf{A}_{\mathscr{S}}^T\mathbf{f}_{\mathscr{S}}=\mathbf{p}_{\mathscr{S}}\Rightarrow \mathbf{1}^T\mathbf{p}_{\mathscr{S}}=\mathbf{1}^T\mathbf{A}_{\mathscr{S}}^T\mathbf{f}_{\mathscr{S}}=\mathbf{0}\label{eq:claim}
	\end{equation}
	Since all entries of $\mathbf{p}$ are strictly positive from the initial assumption, we have $\mathbf{1}^T\mathbf{p}_{\mathscr{S}}\neq0$ which contradicts (\ref{eq:claim}) and completes the proof. 
\end{proof}

\noindent\textbf{Generating optimal network topology}
The optimal secondary network is obtained after solving the optimization problem.
\begin{equation}
	\begin{aligned}
		\min_{\mathbf{x}}&~\mathbf{w}^T\mathbf{x}\\
		\textrm{s.to.}&~(\ref{eq:sec-degree}),(\ref{eq:pf-sec}),(\ref{eq:radial})
	\end{aligned}
\end{equation}

\begin{figure}[htbp]
	\centering
	\includegraphics[width=0.48\textwidth]{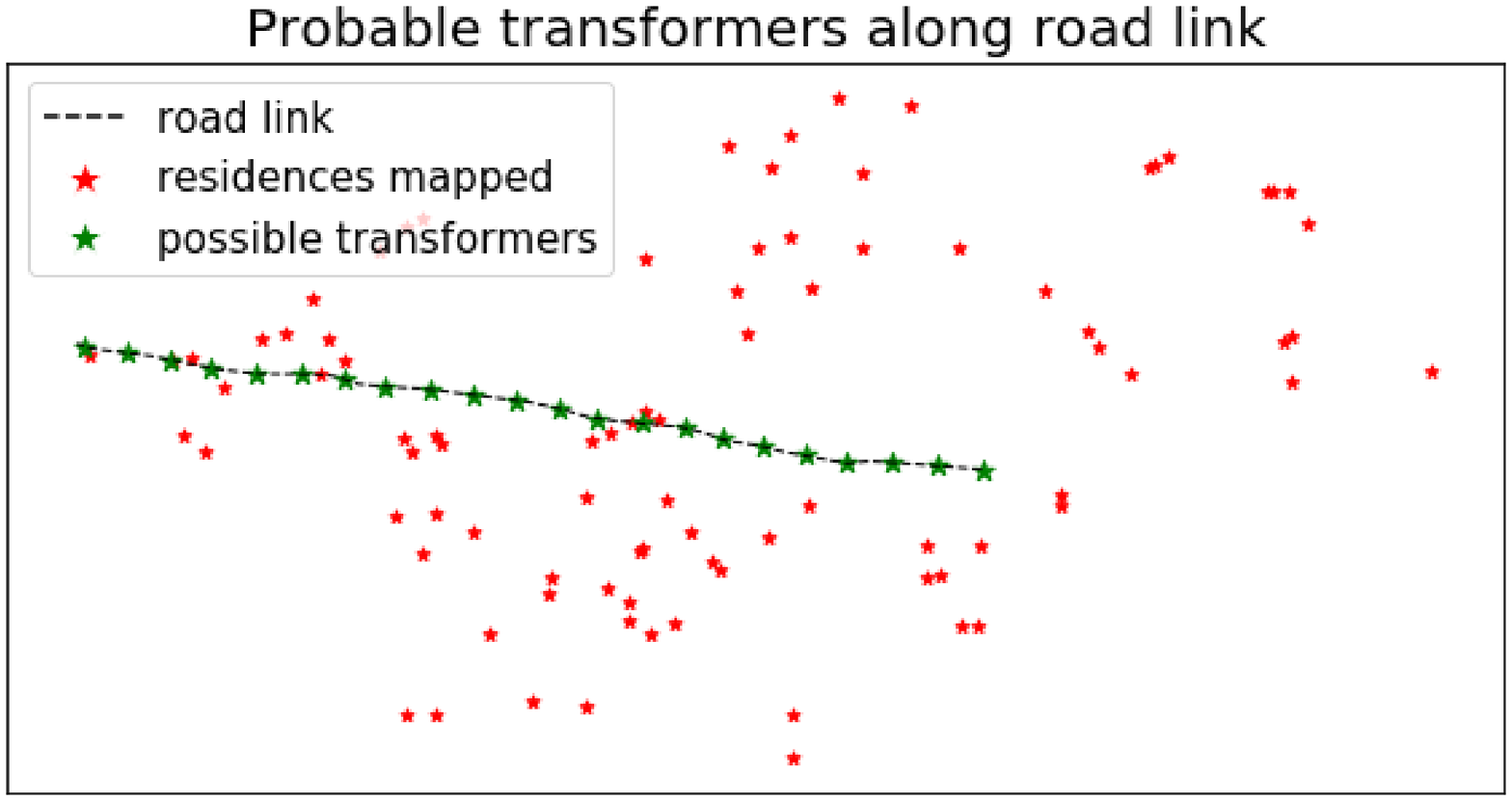}
	\includegraphics[width=0.48\textwidth]{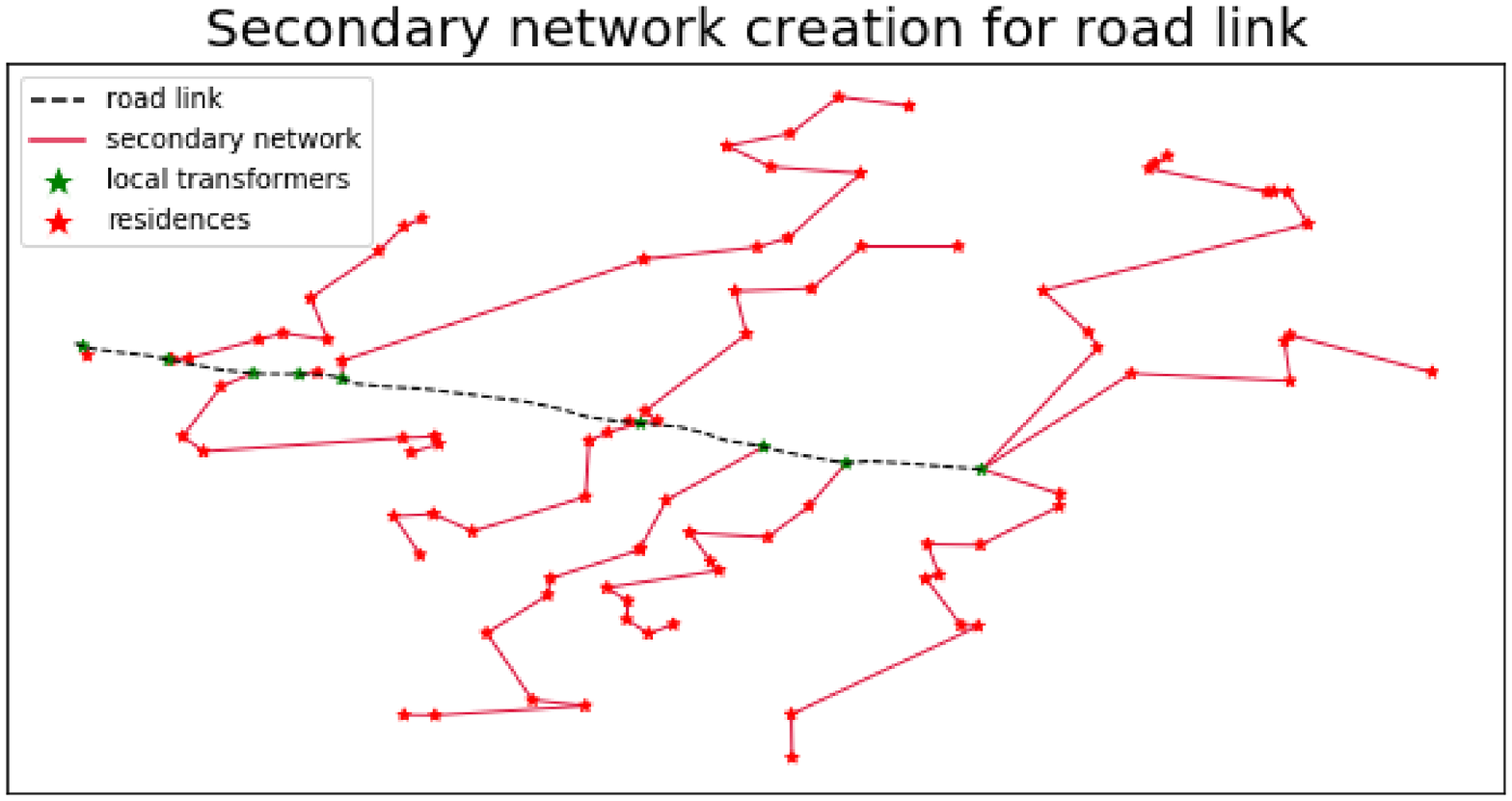}
	\caption{The probable transformer locations are identified along the road network link and the secondary network is created by solving the optimization problem. The network originates from the road link and connects residences mapped to it in a forest of starlike trees rooted at the transformers.}
	\label{fig:secondary}
\end{figure}
\noindent\textbf{Example.}~The secondary network connecting residences along a road network link is shown in Fig.~\ref{fig:secondary}. The first figure shows the residences (red) mapped to the road link and the probable transformers (green points) along the link. A Delaunay triangulation is considered to connect the points and obtain the set of possible edges. Thereafter, the optimization problem is solved which identifies the optimal set of edges as shown in the bottom figure. We observe that all residences are connected in star-like trees with roots as one of the transformers along the road link.

\subsection{Creation of Primary Network}\label{ssec:primary}
In this step, the goal is to create the primary distribution network which connects the substations to the local transformers. We aggregate the load at residences to the local transformer locations (obtained as an output from the preceding step). In this paper, we have aggregated the average demand at the residences; however, \emph{diversity factor} may be used in the aggregation to appropriately rate the transformers~\cite{kersting_book}. The objective of the current step is to connect all these local transformer locations to the substation.

\begin{figure}[htbp]
	\centering
	\includegraphics[width=0.48\textwidth]{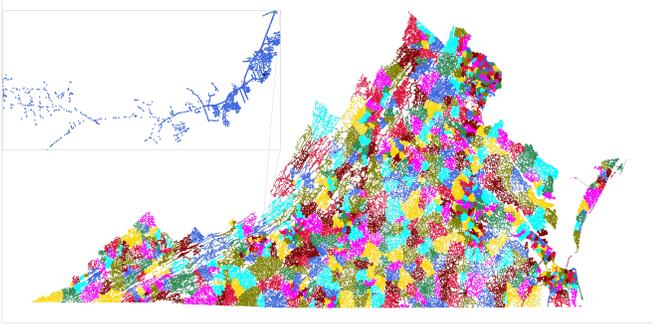}
	\caption{Voronoi partitioning of the road and transformer nodes in Virginia based on the shortest path distance (along road network) to the substations.}
	\label{fig:partition-1}
\end{figure}
\noindent\textbf{Partitioning the problem.}~As mentioned earlier in Section~\ref{sec:prelim}, we use the road network as a proxy for the synthetic primary network. The problem of primary network generation is to select the optimal edge set from the candidate road links. However, the combinatorial problem becomes challenging to solve when considering a large geographical area. Therefore, the problem of generating the primary network is partitioned into a number of sub-problems (one problem corresponding to each distribution substation). First, we include the local transformer nodes (located along road links) in the original road network graph. Thereafter, we partition the nodes in an intermediate step and use the road network subgraph composed of these partitioned nodes to construct the primary distribution network. 

The nodes can be partitioned such that each of them is assigned to the geographically nearest substation. However, such partitioning gives rise to multiple disconnected components (of varied sizes) in the induced subgraph. In such a situation, the transformers cannot be connected through the road network. To this end, the transformers are clustered so that each of them is mapped to the nearest substation along the road network. This is depicted in Fig~\ref{fig:partition-1} where each color represents a partition of transformer nodes. These partitions are known as Voronoi cells which are centered at the substation location. The partitioning is done based on the \emph{shortest path distance} metric~\cite{voronoi} which ensures that each node is mapped to the nearest substation along the road network and each induced subgraph has a single connected component.

\noindent\textbf{Requirements} Once we obtain the \emph{road network subgraph} $\mathscr{G}_R(\mathscr{V},\mathscr{E}_R)$ with $\mathscr{V}=\mathscr{V}_R\bigcup\mathscr{V}_T$ for each partition, the goal of the current step is to select edges $\mathscr{E}_R^{\star}\subseteq\mathscr{E}_R$ such that (i) all local transformer nodes ($i\in\mathscr{V}_T$) are connected, (ii) the road network nodes ($i\in\mathscr{V}_R$) can be covered only to connect local transformer nodes (i.e., the road nodes cannot be leaf nodes in the created primary network) and (iii) the created primary network is a forest of trees with each tree being rooted at a road node and connects local transformer nodes through the road links in a tree structure.

The last condition follows the assumption that a substation can have multiple feeder lines connecting several root nodes feeding local transformers in remote localities. The formal problem statement is provided here.
\begin{problem}[Primary network creation problem]
	Given the connected network $\mathscr{G}_R(\mathscr{V},\mathscr{E}_R)$, find $\mathscr{E}_R^{\star}\subseteq\mathscr{E}_R$ such that the induced subgraph network $\mathscr{G}_R^\star(\mathscr{V}^\star,\mathscr{E}_R^\star)$, with $\mathscr{V}^\star=\mathscr{V}_R^\star\bigcup\mathscr{V}_T$, is a forest of trees with each tree rooted at some $r\in\mathscr{V}_R^\star\subseteq\mathscr{V}_R$ and the overall network length is minimized.
\end{problem}

\noindent\textbf{Node variables.}~The node set $\mathscr{V}=\mathscr{V}_R\bigcup\mathscr{V}_T$ comprises of road and transformer nodes respectively. The residential average hourly demands are aggregated at the transformer locations and stacked in a $|\mathscr{V}_T|$-length vector $\mathbf{p}$. Note that the power demand at road nodes is zero since they are dummy nodes with a purpose to connect transformer nodes. Let $v_i$ represent the voltage at the node $i$. The node voltages can be stacked in $|\mathscr{V}_R|+|\mathscr{V}_T|$ length vectors $\mathbf{v}$.

We assign binary variables $\{y_r,z_r\}_{r\in\mathscr{V}_R}\in\{0,1\}$. Variable $y_r=1$ indicates that road network node $r$ is part of the primary network and vice versa. Variable $z_r=0$ indicates that road network node $r$ is a root node (connected to substation through a high voltage feeder line) in the primary network and $z_r=1$ indicates otherwise. The binary variables are stacked in $|\mathscr{V}_R|$-length vectors $\mathbf{y}$ and $\mathbf{z}$ respectively.

\noindent\textbf{Edge variables}
In order to identify which edges comprise of the primary network, each edge $e$ is assigned a binary variable $\{x_e\}_{e\in\mathscr{E}_R}\in\{0,1\}^{|\mathscr{E}_R|}$. Variable $x_e=1$ indicates that the edge is part of the optimal primary network and vice versa. The binary variable for all edges and edge power flows can be stacked in a $|\mathscr{E}_R|$-length vectors $\mathbf{x}$ and $\mathbf{f}$ respectively.

\noindent\textbf{Connectivity constraint}
Let $e=(r,j)\in\mathscr{E}_R$ denote an edge which is incident on the road node $r\in\mathscr{V}_{R}$. $\sum_{e:(r,j)}x_{e}$ is the degree of node $r$ in graph $\mathscr{G}_R$.
\begin{subequations}
	\begin{align}
		\sum_{e=(r,j)}&x_{e}\leq |\mathscr{E}_R|y_r, &\forall r\in\mathscr{V}_{R}\label{eq:non-chosen}\\
		\sum_{e=(r,j)}&x_{e}\geq y_r, &\forall r\in\mathscr{V}_{R}\label{eq:chosen}\\
		1-&z_r\leq y_r, &\forall r\in\mathscr{V}_{R}\label{eq:non-root}\\
		\sum_{e:(r,j)}&x_{e}\geq2(y_r+z_r-1), &\forall r\in\mathscr{V}_{R}\label{eq:transfer}
	\end{align}
	\label{eq:prim-connectivity}
\end{subequations}

\noindent If $r$ is not included ($y_r=0$) in the primary network, (\ref{eq:non-chosen}) and (\ref{eq:chosen}) ensures that there are no incident edges on $r$ since $\sum_{e=(r,j)}x_{e}=0$. Further, (\ref{eq:non-root}) ensures $z_r=1$ which implies that an unselected road node is treated as a non-root node.

\noindent If $r$ is included in primary network and is not a root node ($y_r=1,z_r=1$), it has to be a transfer node with a minimum degree of $2$ which is ensured by (\ref{eq:transfer}) through $\sum_{e=(r,j)}x_{e}\geq2$.

\noindent If $r$ is included in primary network and is a root node ($y_r=1,z_r=0$), (\ref{eq:non-chosen}), (\ref{eq:chosen}) and (\ref{eq:transfer}) ensures that the degree of the road node is positive. 

\noindent\textbf{Ensuring radiality constraint.}~ The primary network is a forest of trees with $|\mathscr{V}_T|+\sum_{r\in\mathscr{V}_R}y_r$ nodes and $\sum_{r\in\mathscr{V}_R}(1-z_r)$ root nodes. Therefore, from graph theory, we have the radiality constraint as 
\begin{equation}
	\sum_{e\in\mathscr{E}_R}x_e=|\mathscr{V}_T|+\sum_{r\in\mathscr{V}_{R}}y_r-\sum_{r\in\mathscr{V}_{R}}(1-z_r)\label{eq:prim_radial}
\end{equation}
However, this is not a sufficient condition for radiality since it does not avoid formation of disconnected components with cycles. Therefore, we need to enforce constraints to avoid formation of cycles in individual trees.

\noindent\textbf{Power balance and flow constraints}
We can define the branch-bus incidence matrix $\mathbf{A_{\mathscr{G}_R}}\in\mathbb{R}^{|\mathscr{E}_R|\times|\mathscr{V}|}$. Since the order of rows and columns in $\mathbf{A_{\mathscr{G}_R}}$ is arbitrary, we can partition the rows without loss of generality as $\mathbf{A_{\mathscr{G}_R}}=\begin{bmatrix}\mathbf{A_{T}}&\mathbf{A_{R}}\end{bmatrix}$. Here, the partitions $\mathbf{A_{T}}$ and $\mathbf{A_{R}}$ are obtained by stacking the columns of $\mathbf{A_{\mathscr{G}_R}}$ corresponding to transformer and road nodes respectively.
\begin{subequations}
	\begin{align}
		&\mathbf{A_{T}}^T\mathbf{f}=\mathbf{p}\label{eq:pf-tsfr}\\
		-&\overline{s}(1-\mathbf{z})\leq\mathbf{A_{R}}^T\mathbf{f}\leq\overline{s}(1-\mathbf{z})\label{eq:pf-road}\\
		-&\overline{f}\mathbf{x}\leq \mathbf{f}\leq \overline{f}\mathbf{x}\label{eq:pf-flow}
	\end{align}
	\label{eq:prim-pf}
\end{subequations}
(\ref{eq:pf-tsfr}) ensures that a path exists between each transformer node and a root node. If a road node is not a root node (with $z_r=1$), (\ref{eq:pf-road}) enforces that the consumption at the node is $0$. If a road node is a root node (with $z_r=0$), the power consumption/injection at the node is limited by the feeder capacity $\overline{s}$. (\ref{eq:pf-flow}) ensures that if an edge is selected ($x_e=1$), the power flow is limited by the line capacity $\overline{f}$.

\noindent\textbf{Voltage constraints.}~ The created primary network should also satisfy the operational voltage constraints. The long HV lines from the substation to the root nodes in the optimal primary distribution network end in voltage regulators which ensure that the root nodes have a voltage of $1$pu. This is ensured by (\ref{eq:volt-root}). (\ref{eq:volt-limit}) limits the voltage at all nodes within the acceptable limits of $[\underline{v},\overline{v}]$.
\begin{subequations}
	\begin{align}
		(1-v_r)\leq &z_r~\forall r\in\mathscr{V}_{s_R}\label{eq:volt-root}\\
		\underline{v}\mathbf{1}\leq \mathbf{v}\leq &\overline{v}\mathbf{1}\label{eq:volt-limit}\\
		-(1-x_e)M\leq &v_i-v_j-r_ef_e\leq(1-x_e)M,~\forall e\in\mathscr{E}_s\label{eq:volt-edge}
	\end{align}
	\label{eq:prim-voltage}
\end{subequations}
If $r_e$ denotes the resistance of the line $e:(i,j)\in\mathscr{E}_R$, the Linearized Distribution Flow (LDF) model relates the squared voltage magnitude to power flows linearly as $v_i^2-v_j^2=2r_ef_e$ where $f_e$ is the entry from the vector $\mathbf{f}$ corresponding to edge $e$. The squared voltage can be approximated as $v_i^2\approx 2v_i-1$ which leads to the relation $v_i-v_j=r_eP_e$ (see~\cite{ldf}). Notice that (\ref{eq:volt-edge}) enforces this constraint on only those edges $e\in\mathscr{E}_R$ for which $x_e=1$. Here, a tight bound of $M$ can be selected as $M=\overline{v}-\underline{v}$.

\noindent\textbf{Generating optimal primary network} Each edge $e=(u,v)\in\mathscr{E}_R$ is assigned a weight $w_e=w(u,v)=\mathsf{dist}(u,v)$ which is the geodesic distance between the nodes. Additionally, for every road node $r\in\mathscr{V}_{R}$, we compute its geodesic distance from the substation $s$ and is denoted by $\mathsf{d}_r$. The optimal primary network topology is obtained by solving the optimization problem.
\begin{equation}
	\begin{aligned}
		\min_{\mathbf{x},\mathbf{y},\mathbf{z}}&~\sum_{e\in\mathscr{E}_s}x_ew_e+\sum_{e\in\mathscr{V}_{s_R}}(1-z_r)d_r\\
		\textrm{s.to.}&~(\ref{eq:prim-connectivity}),(\ref{eq:prim_radial}),(\ref{eq:prim-pf}),(\ref{eq:prim-voltage})
	\end{aligned}
\end{equation}

\noindent\textbf{Handling large optimization problems.}~Note that for the optimization problem in (\ref{eq:prim-pf}), each road node corresponds to two binary variables and each edge corresponds to one. For a Voronoi partition with a large number of road and transformer nodes, the optimization problem to create the primary network is required to be scaled for faster computation. For this purpose, we further partition the nodes based on \emph{Girvan-Newman} hierarchical community detection algorithm~\cite{girvan7821}. We continue partitioning until all the partitions reach a specified condition (for example, have at most some specified number of nodes, have at most a specified load demand etc.). Such a partitioning leads to connected components in the graph and the optimization problem in (\ref{eq:prim-pf}) can be solved for each partition separately.

\begin{figure*}[htbp]
	\centering
	\includegraphics[width=0.32\textwidth]{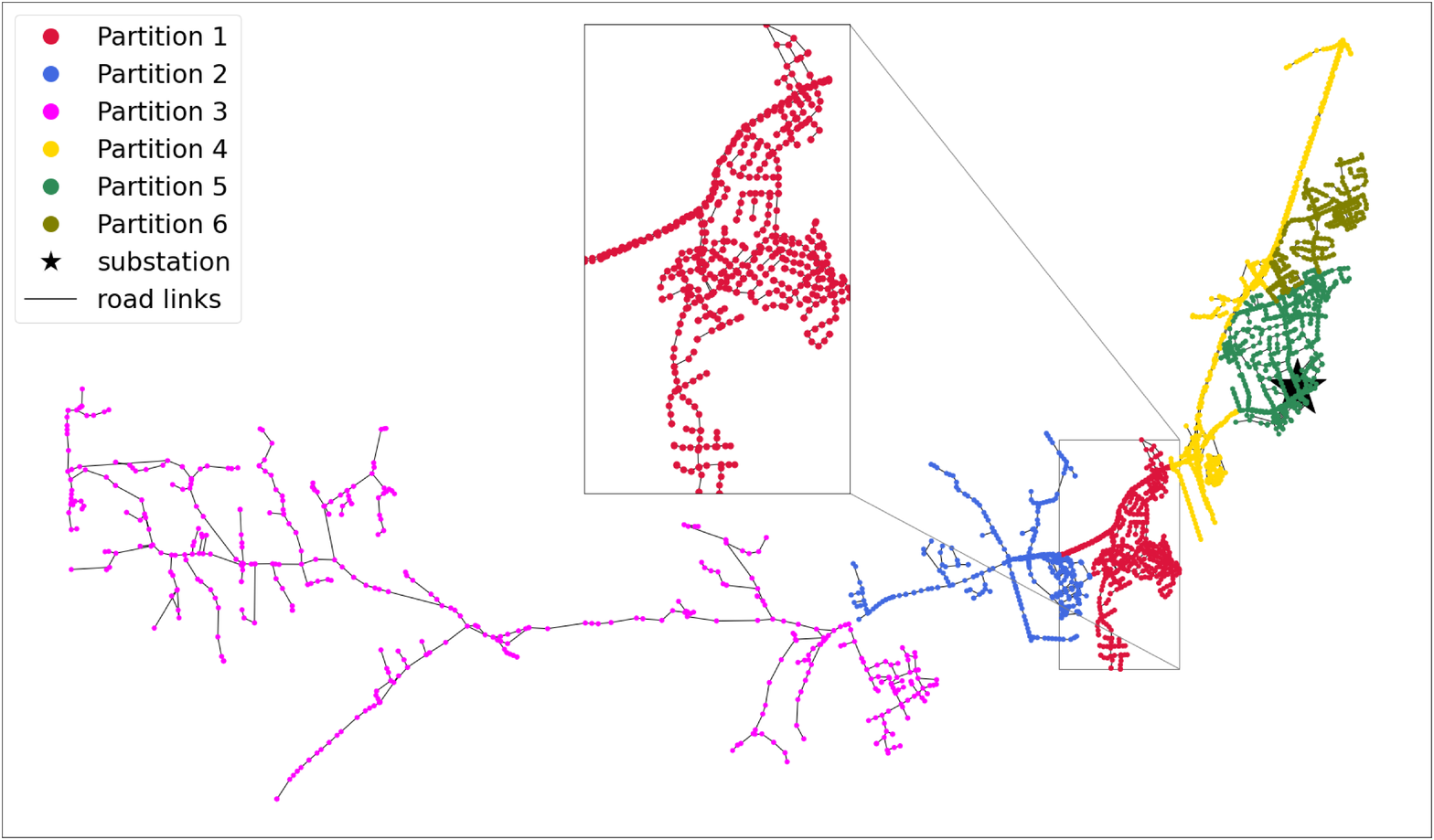}
	\includegraphics[width=0.32\textwidth]{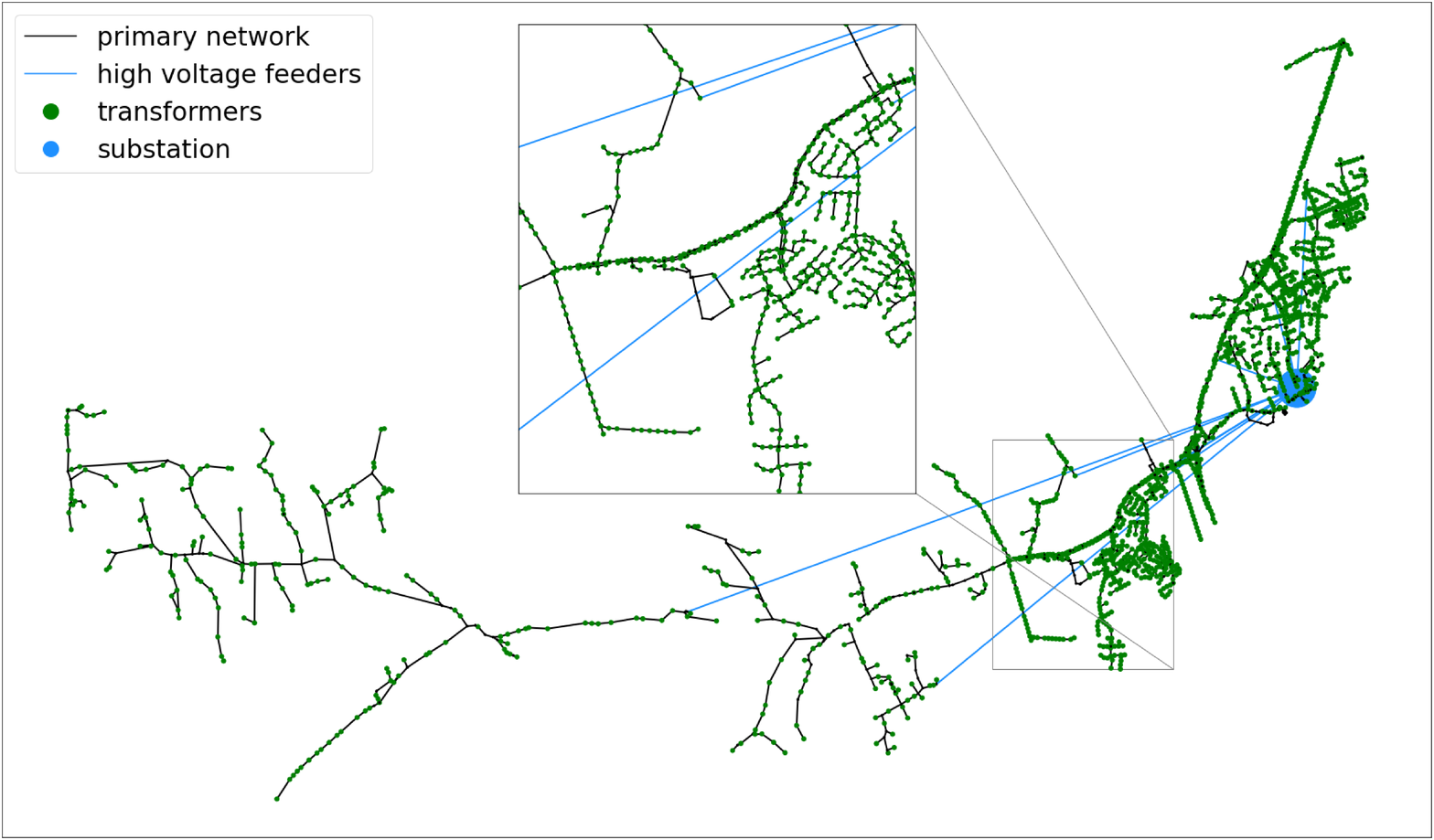}
	\includegraphics[width=0.32\textwidth]{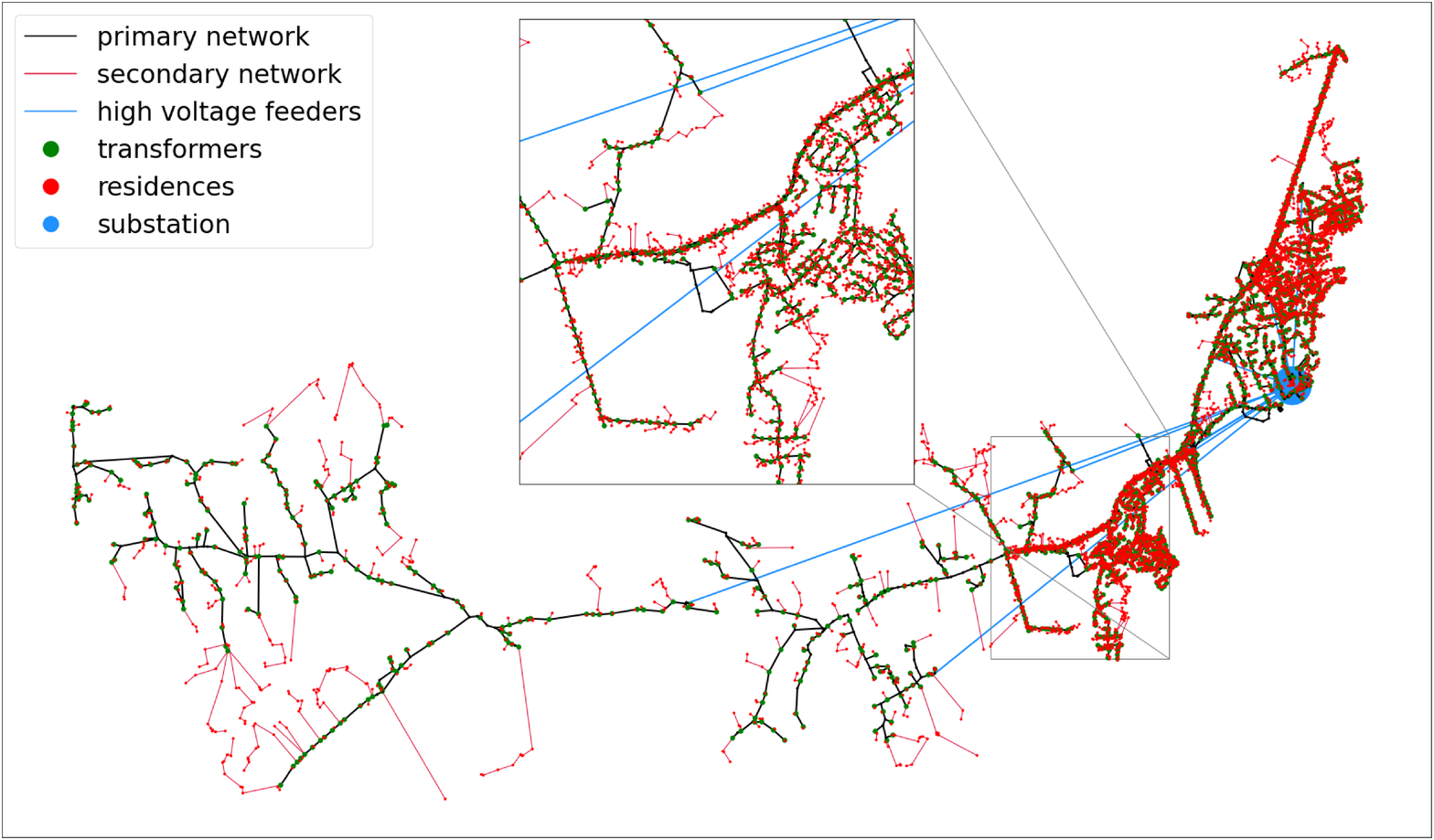}
	\caption{Synthetic primary network generation for a substation in south-west Virginia. First, the Girvan-Newman algorithm is applied to partition the large road network graph. Thereafter, the optimization problem is solved to obtain the optimal primary network. Finally the complete synthetic network is created by adding the secondary network to the primary.}
	\label{fig:prim-example}
\end{figure*}
\noindent\textbf{Example.}~We synthesize the primary distribution network originating from a substation in south-west Virginia. The induced subgraph of the road network composed of nodes in the partition consists of 2990 nodes and 3178 edges. We apply the Girvan-Newman hierarchical community detection algorithm to further partition the road network subgraph. Fig.~\ref{fig:prim-example} shows the partitions in the network which are of comparable sizes (less than 700 nodes in each partition). Thereafter, we solve the optimization problem (\ref{eq:prim-pf}) for each of these sub-partitions. Note that the substation connects these partitions through one or more high voltage feeder lines (blue lines). The optimal edges are selected from the road links such that all the transformer nodes (green nodes) are covered in a tree structure and the overall length of the network is minimized. Finally, the associated secondary distribution network is appended with the primary to obtain the complete synthetic distribution network.

\begin{figure}[htbp]
	\centering
	\includegraphics[width=0.24\textwidth]{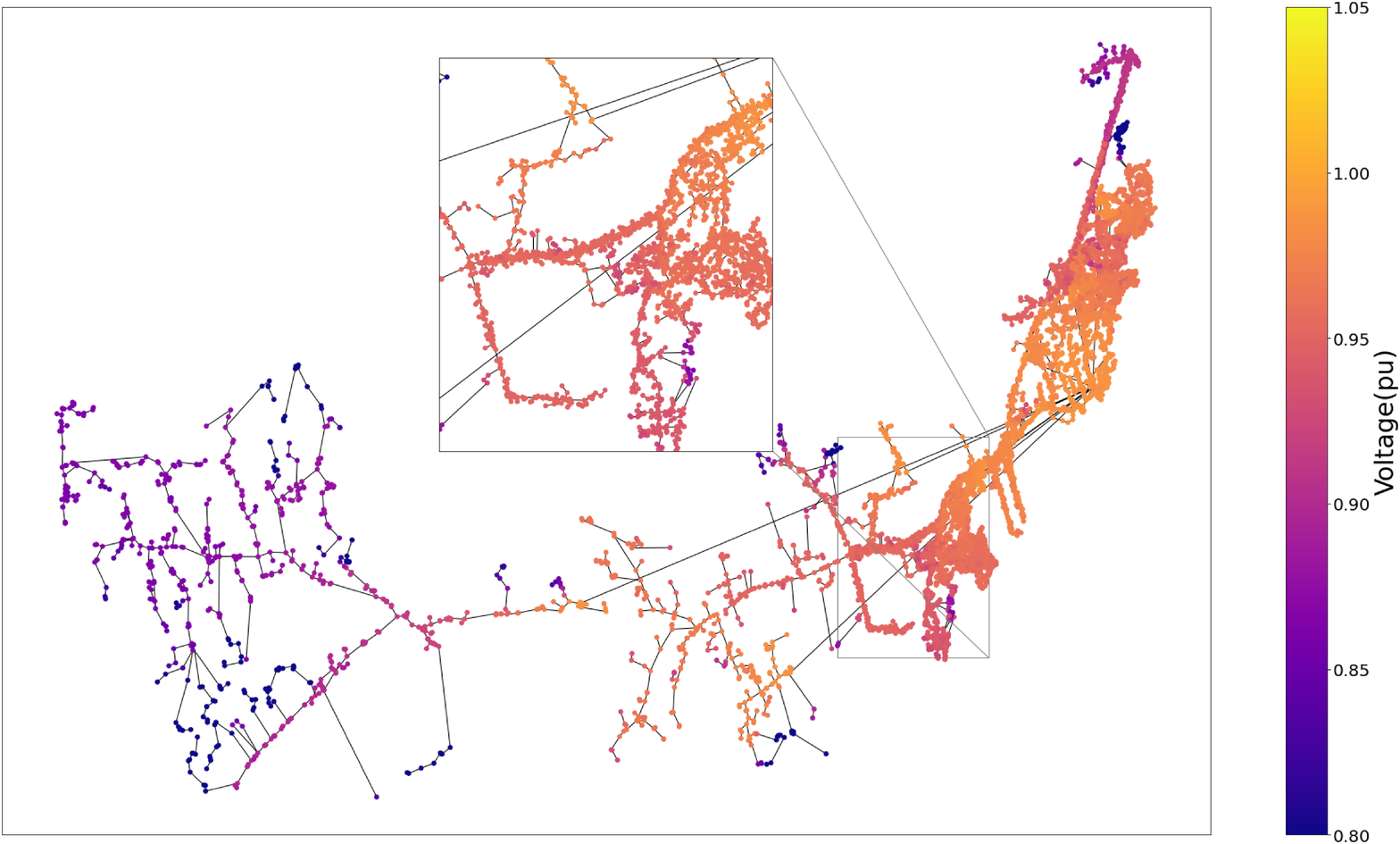}
	\includegraphics[width=0.24\textwidth]{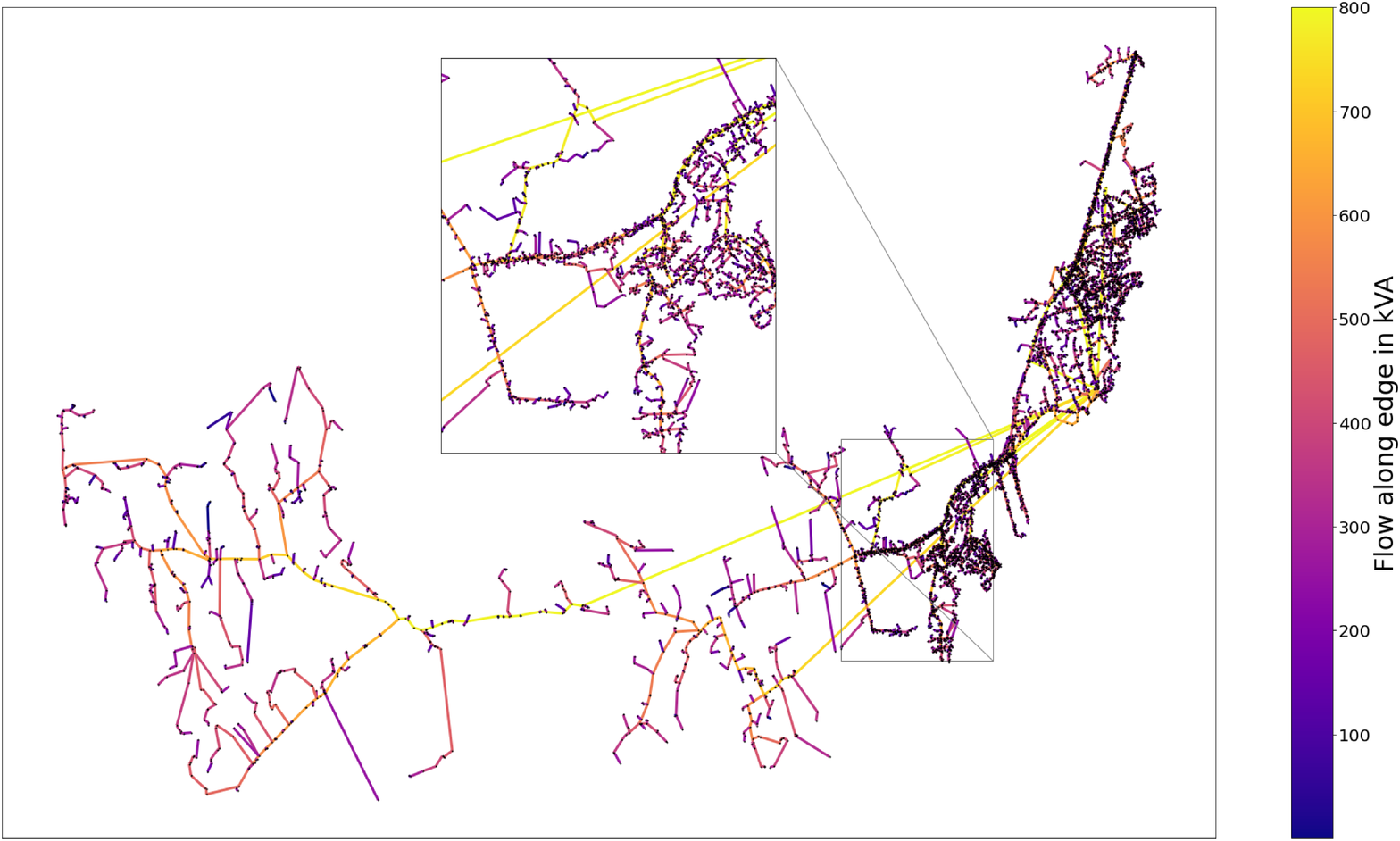}
	\caption{Node voltages and edge flows in the created synthetic network. The generated synthetic distribution network satisfies the usual power engineering operational constraints.}
	\label{fig:prim-validate}
\end{figure}
The synthetic network is generated following power engineering operational constraints such as the node voltages and edge flows. Considering that all residential customers are consuming average hourly load, a linearized distributed power flow program is run on the network. The operating voltage at each node is recorded and shown in the left figure of Fig~\ref{fig:prim-validate} through a colored map. The ANSI C84.1 standard mandates a $\pm 5\%$ voltage regulation for normal operating condition. It is observed that the voltage profile is within acceptable limits of 0.95 to 1.0 pu. The voltage profile at the extreme ends can be further improved by optimally placing capacitors along the network. The loading level of each line is also depicted in the right figure of Fig~\ref{fig:prim-validate}. The loading level of lines near the residences are small where as those for the feeders are high, which is expected to occur for a radial distribution network. This completes the operational validation of the network where we can conclude that when all residential customers consume average hourly load, the power distribution network operates in a \emph{secure} state with voltages and line flows within acceptable operational limits.

\begin{table}[htbp]
	\centering
	\caption{Residence and road network statistics}
	\label{tab:data-stat}
	\begin{tabular}{ccllll}
		\hline
		\multicolumn{2}{l}{} & \textbf{Mean} & \textbf{Median} & \textbf{Minimum} & \textbf{Maximum} \\ \hline
		\multicolumn{2}{c}{Residences} & 24907 & 11454 & 1940 & 404109 \\ \hline
		\multirow{2}{*}{\begin{tabular}[c]{@{}c@{}}Road\\ Network\end{tabular}} & Nodes & 6320 & 4090 & 505 & 71126 \\ \cline{2-6} 
		& Edges & 7438 & 4675 & 610 & 86514 \\ \hline
	\end{tabular}
\end{table}
\noindent\textbf{Computational scalability.}~We generate synthetic distribution network for the entire state of Virginia by solving two MILP optimization problems. There are $134$ counties and independent cities and residence and road network information is extracted for each of them individually. The county-wise statistics of this extracted data is provided in Table~\ref{tab:data-stat}. In this section, we discuss about the factors affecting the computational speed of the generation procedure.
\begin{table}[htbp]
	\centering
	\caption{Summary of time complexity}
	\label{tab:time}
	\begin{tabular}{|l|l|l|l|l|}
		\hline
		\textbf{\begin{tabular}[c]{@{}l@{}}Algorithm \\ Steps\end{tabular}} & \multicolumn{1}{c|}{\textbf{\begin{tabular}[c]{@{}c@{}}Step 1\\ Mapping\end{tabular}}} & \multicolumn{1}{c|}{\textbf{\begin{tabular}[c]{@{}c@{}}Step 2\\ Secondary\\ Network\end{tabular}}} & \multicolumn{1}{c|}{\textbf{\begin{tabular}[c]{@{}c@{}}Intermediate \\ step\\ Voronoi \\ partitioning\end{tabular}}} & \multicolumn{1}{c|}{\textbf{\begin{tabular}[c]{@{}c@{}}Step 3\\ Primary\\ Network\end{tabular}}} \\ \hline
		\textbf{\begin{tabular}[c]{@{}l@{}}Worst time \\ complexity\end{tabular}} & $\mathcal{O}(n)$ & $\mathcal{O}(2^n)$ & $\mathcal{O}(n\log n)$ & $\mathcal{O}(2^n)$ \\ \hline
		\textbf{\begin{tabular}[c]{@{}l@{}}Maximum\\ computation \\ time\end{tabular}} & 5 mins & 23 hours & 3 mins & 28 hours \\ \hline
	\end{tabular}
\end{table}

All the experiments (county level secondary network generation, primary network generation for Voronoi partition) are conducted on a node with CentOS 7.6.1810 x8664 having 40  cores - each core running Intel Xeon Gold 6148 CPU@ 2.40 GHz. The total available memory on every node is 150GB, shared by the 40 cores on the node. Table~\ref{tab:time} summarises the worst time complexity and worst computation time for each step in the algorithm. 

For the secondary network generation, we observe that the time to create the connection between the residences and local transformers increases exponentially with the number of residences to be connected. The box plot shows significant level of variation for same range of residences. This is due to the pattern in which they are distributed around the road network link. A congested distribution of residences at particular spots along a road link requires comparably more time than the cases where residences are uniformly placed along the link. For the primary network generation, the computation time also exponentially increases with the number of binary variables involved in the problem.  
\begin{figure}[htbp]
	\centering
	\includegraphics[width=0.24\textwidth]{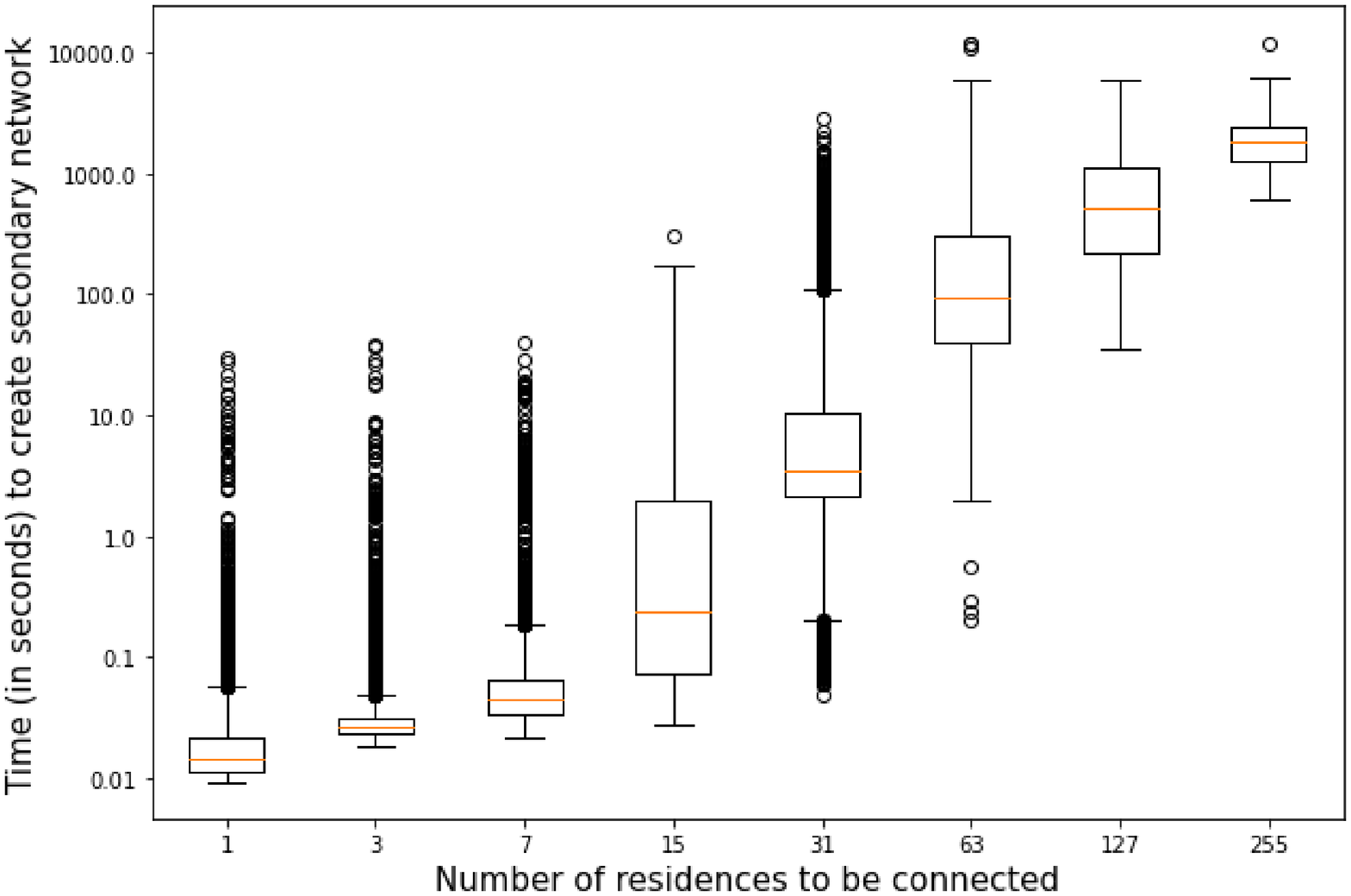}
	\includegraphics[width=0.24\textwidth]{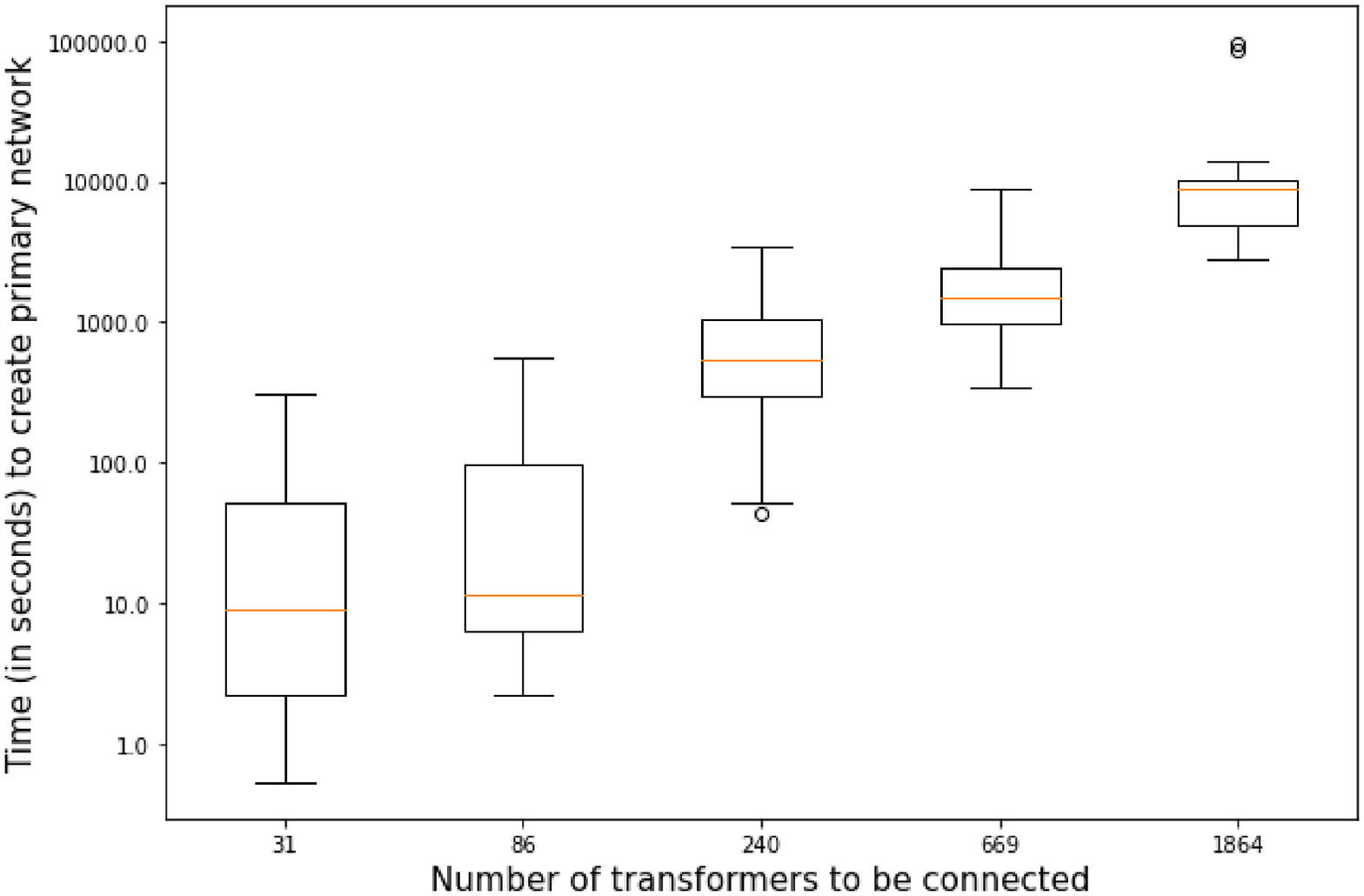}
	\caption{Box plots showing time required to create the secondary (left) and primary (right) distribution networks. The time required increases exponentially with the number of binary variables. The number of binary variables is proportional to number of residences (for secondary network) and transformer nodes (for primary network) to be connected.}
	\label{fig:secondary-time}
\end{figure}

\section{Validation and Verification}\label{sec:valid}
The primary goal of the paper is to generate realistic synthetic distribution networks which resemble real-world networks. We obtained real world power distribution networks for town of Blacksburg in south-west Virginia. However, the actual network has been incrementally built over a long period of time. Therefore, it is expected that there would be structural differences between the two networks. In this section, we attempt to validate the proposed methodology for generating networks

\noindent\textbf{Variation in network structure.} Fig.~\ref{fig:validate-structure} shows the structural comparison between the actual distribution network of a region and the synthetic network generated for the same. A zoomed in view (bottom right) of a portion of the networks shows that they are similar to each other. While the synthetic network almost retraces the road network, the actual network is adjacent to it. This validates our primary assumption that the primary distribution network follows the road network. The actual network has multiple redundancies in its structure which is present to increase the resiliency of the distribution network. Our work does not integrate such attributes while generating the synthetic networks which results in a different structure from the real network.

\noindent\textbf{Feeder selection.} The significant structural difference between the two networks is the substation feeder to which it is connected. The actual network is connected to the nearest geographically located substation, where as the synthetic network is connected to a distant feeder. It is to be noted that the synthetic distribution network is generated using first principles with the assumption that it follows the road network to the maximum extent (it is easier to place distribution poles along the road network). The zoomed in view on the center right of Fig~\ref{fig:validate-structure} provides the explanation for the structural difference in the two networks. It is seen that the nearest substation (from which the actual network is fed) has no road links connecting itself to the neighborhood. On the contrary the other substation, though located at a further distance geographically, has road links connecting itself to the neighborhood under consideration. The intermediate step in our algorithm assigns nodes to substations through a shortest network distance based Voronoi partitioning. Therefore, the synthetic network generation algorithm assigns the nodes in the neighborhood to the latter substation.
\begin{figure}[htbp]
	\centering
	\includegraphics[width=0.485\textwidth]{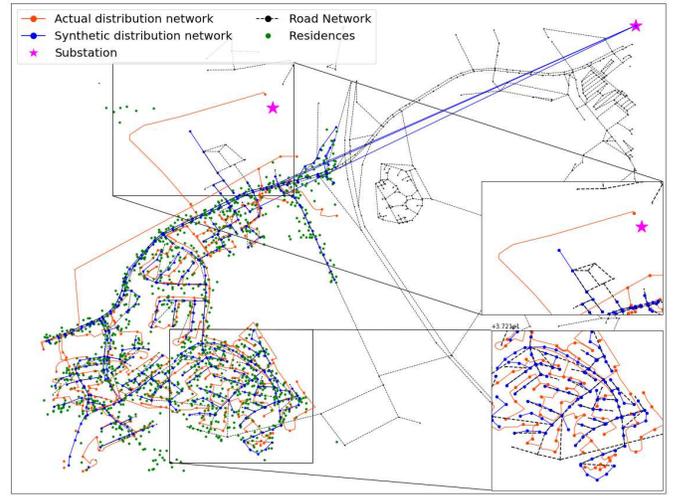}
	\caption{Comparing the structure of actual distribution network (red) with the synthetic distribution network (blue) generated using first principles from the knowledge of road network (black). Inset figure on bottom right shows structural similarities between the pair of networks. The inset figure on center right shows the absence of road links near the geographically nearest substation which feeds the actual network. Our synthetic network is fed from the substation which is closest along the road network but is geographically further.}
	\label{fig:validate-structure}
\end{figure}

Note that in the proposed algorithm, the substation feeders need not have equal sharing of load which is similar to the real world  situation wherein substations have different capacities. The maximum capacity of each feeder (or feeder rating) can be pre-defined or randomly selected and used as a constraint in the network generation algorithm.

\begin{figure}[htbp]
	\centering
	\includegraphics[width=0.24\textwidth]{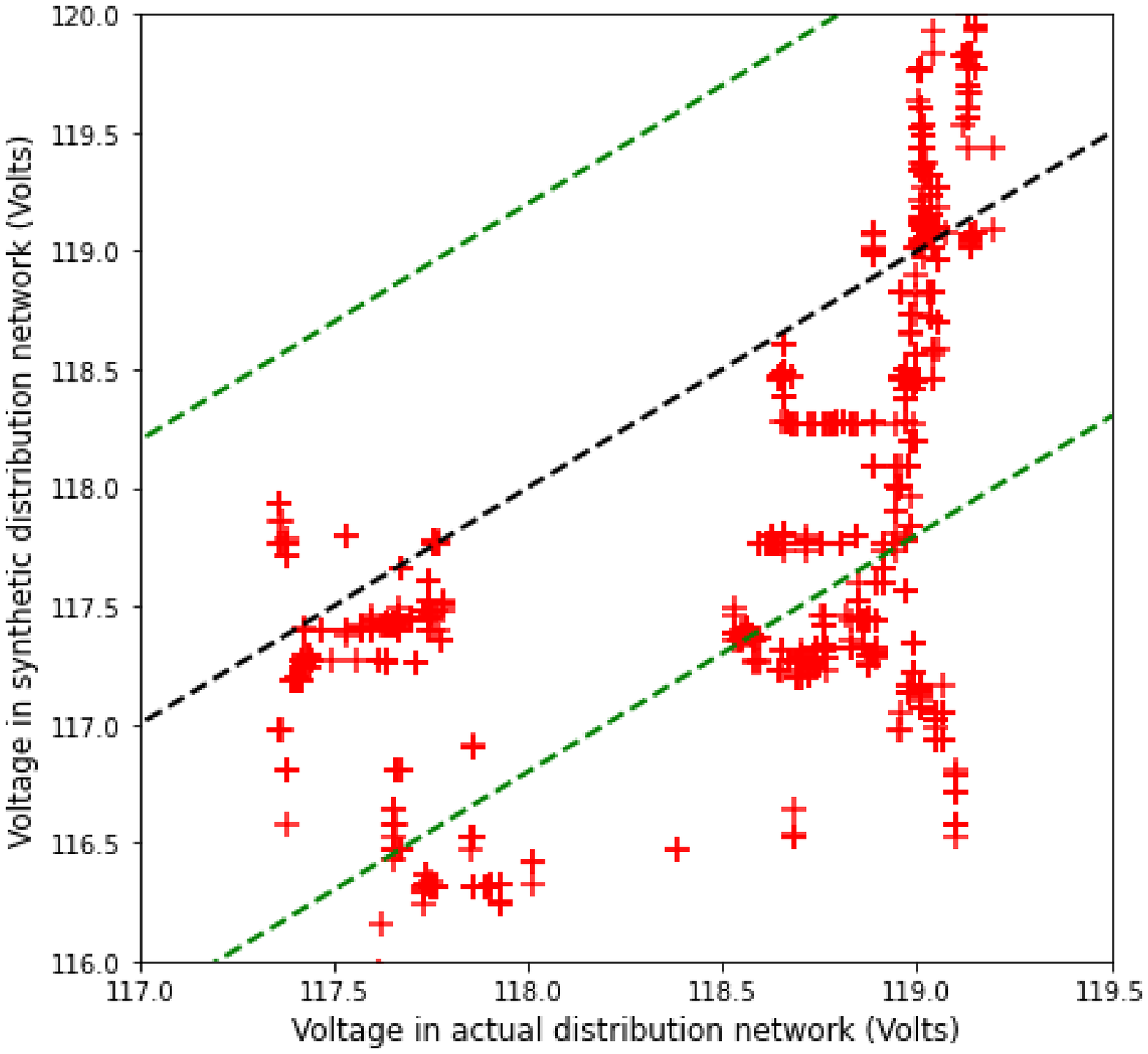}
	\includegraphics[width=0.24\textwidth]{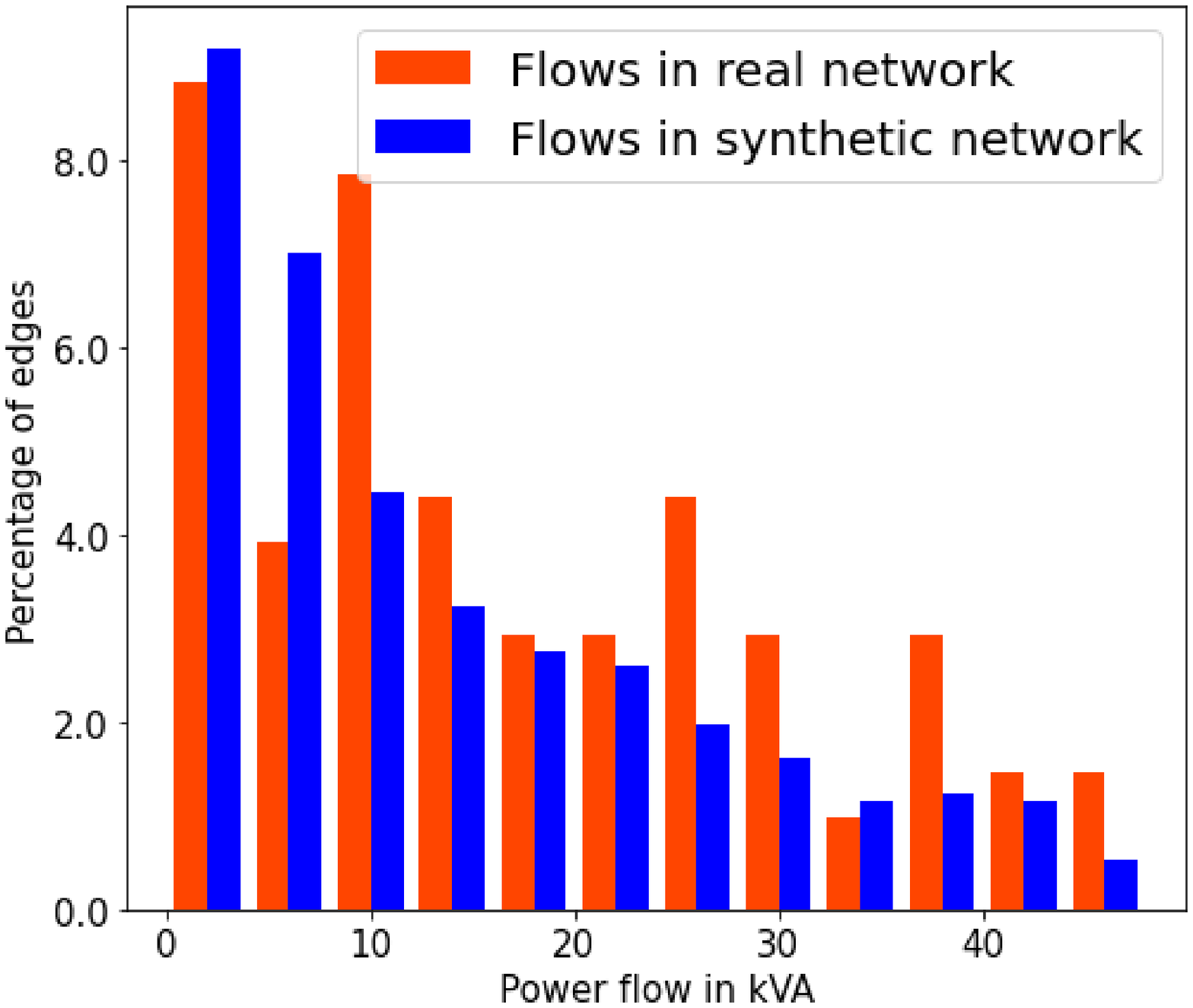}
	\caption{Comparing the residential node voltages and edge power flows for the two networks. The voltages in the synthetic network are within $\pm1\%$ voltage regulation of the voltages in the actual network. The observable differences in the edge flow distribution of the two networks are primarily due to their structural dissimilarities.}
	\label{fig:validate-voltage-flow}
\end{figure}
\noindent\textbf{Variation in node voltage profile and edge flows.} From the above discussion, it is evident that though the actual and synthetic network are similar for most part,  the feeders connecting the network to the substation are often different which leads to the observed differences. We now compare the voltages at the residences when they are connected to the actual and synthetic network in Fig.~\ref{fig:validate-voltage-flow}. We call this \emph{operational validity}: The basic idea is that if we substitute the actual network with the synthetic network, we should see minimal node voltage differences between the two. Here, the green lines signify $\pm1\%$ deviation from the voltages if the residences had been connected to the actual network. We observe that the residence voltages in the synthetic network remain within this $\pm1\%$ regulation, which validates our generated synthetic network. We also compare the edge flows in the two networks through the histogram in Fig.~\ref{fig:validate-voltage-flow}. Since the networks are radial in structure, the edge flows is proportional to the number of children nodes. The structural differences in the two networks explain the observed deviation in the edge flows of the networks. Note that our methodology aims to provides a generic framework to create multiple synthetic distribution networks which resemble actual networks but do not exactly mimic them. One can fine-tune the parameters/constraints in the optimization problem to obtain a different structure.

\section{Conclusion}
We present a methodology to generate synthetic distribution networks for a given geographic location based on the available road network data, which maintains a radial structure and minimizes the total length. We have generated synthetic distribution network for the state of Virginia and proposed methodologies to scale our optimization algorithms for regions of large size. Such realistic synthetic networks are useful in a number of applications when studying power systems engineering. Finally we provided a structural and operational validation of the generated networks by comparing them with real distribution networks.

The consideration of economic, engineering, behavioral and geographical aspects for inferring the distribution network has allowed us to choose certain parameters (e.g., feeder capacities, maximum number of feeders etc.)---these parameters can be pre-defined by user's requirement or can be randomly selected to create ensembles of networks, which can be a direction for further research. Another important aspect of real distribution networks is the existence of multiple redundancies to increase the resiliency of the distribution system, which is a future research direction.

\noindent
\textbf{Acknowledgements.}
This work was partially supported by the NIH Grant 1R01GM109718; NSF Grants: BIG DATA IIS-1633028, OAC-1916805, Expeditions in Computing CCF-1918656, RAPID OAC-2027541, CMMI-1832587; NASA Applied Sciences Program Grant \#80NSSC18K1594; and DARPA under Contract No. HR0011-19-C-0096.

\bibliographystyle{IEEEtran}
\bibliography{references}

\end{document}